\setlist{noitemsep,topsep=0pt,parsep=0pt} % Modify the amount of white space for all lists
\tikzset{every fit/.append style=text badly centered}
\newcommand{\Holant}{\operatorname{Holant}}
\newcommand{\holant}[2]{\ensuremath{\Holant\left(#1\mid #2\right)}}
\tikzstyle{internal} = [draw, fill, shape=circle]
\tikzstyle{external} = [shape=circle]
\tikzstyle{square}   = [draw, fill, rectangle]
\tikzstyle{triangle} = [draw, fill, regular polygon, regular polygon sides=3, inner sep=3pt]
\tikzstyle{pentagon} = [draw, fill, regular polygon, regular polygon sides=5, inner sep=2pt, minimum size=14pt]
\journal{Information and Computation}
\begin{document}

\begin{frontmatter}

%% Title, authors and addresses

%% use the tnoteref command within \title for footnotes;
%% use the tnotetext command for theassociated footnote;
%% use the fnref command within \author or \address for footnotes;
%% use the fntext command for theassociated footnote;
%% use the corref command within \author for corresponding author footnotes;
%% use the cortext command for theassociated footnote;
%% use the ead command for the email address,
%% and the form \ead[url] for the home page:
%% \title{Title\tnoteref{label1}}
%% \tnotetext[label1]{}
%% \author{Name\corref{cor1}\fnref{label2}}
%% \ead{email address}
%% \ead[url]{home page}
%% \fntext[label2]{}
%% \cortext[cor1]{}
%% \address{Address\fnref{label3}}
%% \fntext[label3]{}

\title{Complexity Classification Of The Six-Vertex Model}

%% use optional labels to link authors explicitly to addresses:
%% \author[label1,label2]{}
%% \address[label1]{}
%% \address[label2]{}

\author[1]{Jin-Yi Cai}
\author[2]{Zhiguo Fu}
\author[3]{Mingji Xia}

%{Jin-Yi Cai\thanks{University of Wisconsin-Madison.
% {\tt jyc@cs.wisc.edu}}
%\and Zhiguo Fu\thanks{School of Mathematics, Jilin University. {\tt
%fuzg@jlu.edu.cn}}
%\and
%Mingji Xia\thanks{State Key Laboratory of Computer Science, Institute of Software, Chinese Academy of Science, and University of Chinese Academy of Science, %Beijing, China.
%Supported by China National 973 program 2014CB340300. {\tt mingji@ios.ac.cn}}}

\address[1]{Department of Computer Science, University of Wisconsin-Madison.\footnote{{\tt jyc@cs.wisc.edu}}}
\address[2]{School of Mathematics, Jilin University.\footnote{{\tt fuzg@jlu.edu.cn}}}
\address[3]{State Key Laboratory of Computer Science, Institute of Software, Chinese Academy of Science, and University of Chinese Academy of Science, Beijing, China.\footnote{{\tt mingji@ios.ac.cn}. Supported by China National 973 program 2014CB340300}}
\begin{abstract}
We prove a complexity dichotomy theorem for the six-vertex model.
For every setting of the parameters of the model, we prove that
computing  the partition function is either solvable in polynomial time
or \#P-hard. The dichotomy criterion is explicit.
\end{abstract}

\begin{keyword}
Six-vertex model; Spin system;  Holant problems; Interpolation
\end{keyword}

\end{frontmatter}

%% \linenumbers

%% main text
\section{Introduction}\label{sec:intro}

A primary purpose of complexity theory is to provide
classifications to computational problems according to
their inherent computational difficulty. While computational problems
can come from many sources, a class of problems from statistical
mechanics has a remarkable affinity to what is naturally
studied in complexity theory. These are the \emph{sum-of-product}
computations, a.k.a. \emph{partition functions} in physics.

Well-known examples of partition functions from physics that
have been investigated intensively in complexity theory
include the Ising model and Potts model~\cite{jerrum-sinclair,
goldberg-jerrum-patterson,goldberg-jerrum-potts,lu-ying-li}. Most of these are spin systems.
Spin systems as well as the more general counting
constraint satisfaction problems (\#CSP) are special cases
 of Holant problems~\cite{Cai-Lu-Xia}
(see Section~\ref{sec:preliminary} for definitions).
Roughly speaking, Holant problems are tensor networks where
edges of a graph are variables while vertices
are local constraint functions; by contrast, in  spin systems
vertices are variables and edges are (binary) constraint functions.
%and thus are restricted to binary functions.
Spin systems can be simulated easily as Holant problems,
but Freedman, Lov\'{a}sz and Schrijver
proved that simulation in the reverse direction is generally
not possible~\cite{friedman-lovasz-schrjiver}.
In this paper we study a family of partition functions
that fit the Holant problems naturally, but not as a spin system.
This is the \emph{six-vertex model}.

The six-vertex model in statistical mechanics concerns
 crystal lattices with hydrogen bonds.
 Remarkably it can be expressed
perfectly as a family of Holant problems with 6 parameters for the
associated signatures, although in physics people are more
focused on regular structures such as lattice graphs,
and asymptotic limit.
%Therefore whether one can  compute the
%partition functions of the six-vertex models efficiently is exactly the
%question of the complexity of this family of Holant problems.
In this paper we study the partition functions of
six-vertex models purely from a complexity theoretic view,
and prove a complete classification
of these Holant problems, where the
6 parameters can be arbitrary complex numbers.

The first model in the family of six-vertex models was introduced
by Linus Pauling in 1935 to account for the residual entropy
of water ice~\cite{Pauling}.
%Pauling, L. (1935). "The Structure and Entropy of Ice and of Other Crystals with Some Randomness of Atomic Arrangement". Journal of the American Chemical Society 57 (12): 2680–2684.
Suppose we have  a large number of oxygen atoms.
Each oxygen atom is connected by a bond to four other neighboring oxygen
atoms, and each bond is occupied by one hydrogen atom between two oxygen
atoms. Physical constraint requires that the hydrogen is
closer to either one or the other of the two neighboring oxygens, but
never in the middle of the bond.
 Pauling argued~\cite{Pauling} that, furthermore, the allowed configuration of
 hydrogen atoms is such that at each oxygen site,  exactly two hydrogens
are closer to it, and the other two are farther away.
The placement of oxygen and hydrogen atoms
 can be naturally represented by vertices and edges of a 4-regular graph.
The constraint on the placement of hydrogens can be represented by
an orientation of the edges of the graph, such that
at every vertex, exactly two edges are oriented toward the vertex,
and exactly two edges are oriented away from it.
In other words, this is an \emph{Eulerian orientation}.
Since there are ${4 \choose 2} =6$ local valid configurations,
this is called the six-vertex model.
In addition to  water ice,
potassium dihydrogen phosphate KH$_2$PO$_4$ (KDP) also
satisfies this model.

The valid local configurations of the six-vertex model
are illustrated in Figure~\ref{fig:6vertex}.
\begin{figure}[hbtp]
	\begin{center}
		\includegraphics[width=\textwidth]{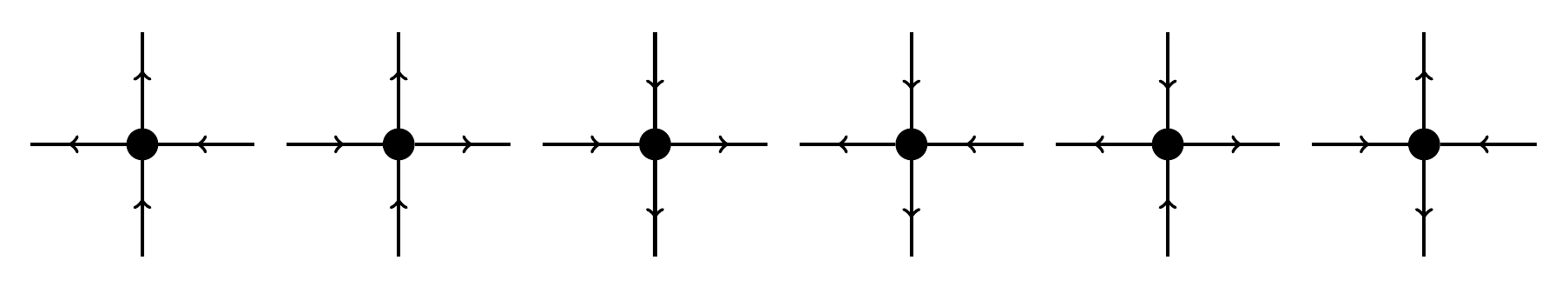}
	%\caption{Configuration $1$ to configuration $6$.}
	\caption{Valid configurations of the six-vertex model.}
	\label{fig:6vertex}
	\end{center}
\end{figure}
There are parameters $\epsilon_1, \epsilon_2, \ldots, \epsilon_6$
associated with each type of the local configuration.
 The total energy $E$ is given by
$E = n_1 \epsilon_1 + n_2 \epsilon_2 + \ldots +  n_6  \epsilon_6$,
where $n_i$ is the number of local configurations of type $i$.
Then the  partition function is $Z = \sum e^{-E/k_BT}$,
where the sum is  over all valid configurations, $k_B$
 is Boltzmann's constant, and $T$ is the system's temperature.
Mathematically, this is a \emph{sum-of-product}
computation where the sum is over all Eulerian orientations
of the graph, and the product is over all vertices where each
vertex contributes a factor $c_i = c^{\epsilon_i}$ if it is in configuration $i$
($1 \le i \le 6$) for some constant $c$.

Some choices of the parameters
are well-studied.
On the square lattice graph,
when modeling ice one takes
$\epsilon_1=\epsilon_2= \ldots =\epsilon_6=0$.
In 1967, Elliott Lieb~\cite{Lieb}
%\bibitem{Lieb} Lieb, E. H. (1967). "Residual Entropy of Square Ice". Physical Review 162 (1): 162–172.
 famously showed that,
as the number $N$ of vertices approaches $\infty$,
%derived
%found the exact solution to a two-dimensional
%ice model known as ``square ice'', and calculated the
%the so-called ``partition function per vertex'' value
the value of the ``partition function per vertex''
$W = Z^{1/N}$ approaches $\left( \frac{4}{3} \right)^{3/2}
\approx 1.5396007\ldots$ (Lieb's square ice constant).
 This matched experimental
data $1.540 \pm 0.001$ so well that it is considered a triumph.

There are other well-known choices in the six-vertex model family.
The KDP model of a ferroelectric is to set
$\epsilon_1=\epsilon_2=0$,  and
$\epsilon_3 = \epsilon_4 = \epsilon_5 = \epsilon_6>0$.
 The Rys $F$ model of an antiferroelectric  is to set
$\epsilon_1=\epsilon_2= \epsilon_3 = \epsilon_4 >0$,
and $\epsilon_5 = \epsilon_6 = 0$.
When there is no ambient electric field, the model chooses
the  zero field assumption:
$\epsilon_1=\epsilon_2$,
$\epsilon_3 = \epsilon_4$, and
$\epsilon_5 = \epsilon_6$.
Historically these are widely considered  among
the most significant applications ever made of statistical mechanics
to real substances.
In classical statistical mechanics the parameters are all
real numbers while in quantum theory the parameters are
complex numbers in general.

In this paper, we give a complete classification
of the complexity of calculating
the partition function $Z$ on any 4-regular graph  defined by an arbitrary
choice parameter values
$c_1, c_2, \ldots, c_6 \in \mathbb{C}$.
(To state our theorem in strict Turing machine model,
we take $c_1, c_2, \ldots, c_6$
to be  \emph{algebraic} numbers.)
Depending on the setting of these values,
%(even for negative $c_i$ which
%correspond to purely imaginary $\epsilon_i \in {\frak i} \mathbb{R}$),
we show that the partition function  $Z$ is either
computable in polynomial time, or it is \#P-hard, with nothing
in between.
The dependence of this dichotomy on the
values
$c_1, c_2, \ldots, c_6$ is explicit.

A number of complexity dichotomy theorems for counting problems have
been proved previously. These are mostly on spin systems,
or \#CSP (counting Constraint
Satisfaction Problems),
% which are both special cases of Holant problems,
or on
Holant problems with \emph{symmetric} local constraint functions.
 \#CSP is the special case of Holant problems
where  {\sc Equalities} of all arities are auxiliary functions
assumed to be present.
Spin systems are a further specialization of  \#CSP,
where there is a single binary constraint function
(see Section~\ref{sec:preliminary}).
 The six-vertex model cannot be expressed as
%%% cite some bunch. on CSP,
a \#CSP problem. It is a Holant problem where the constraint functions are
\emph{not symmetric}.  Thus previous dichotomy theorems do not apply.
This is the  first complexity dichotomy theorem proved for
a class of Holant problems on \emph{non-symmetric} constraint
functions and without auxiliary functions assumed to be present.

However, one important technical ingredient of our proof is to discover
a direct connection between some subset of the six-vertex models
with spin systems.
% This encoding is
%novel; the up ($+$) and down ($-$) of a spin is globally encoded by
%an orientation of a cycle.
Another technical highlight is a new interpolation technique
that carves out subsums of a partition function
 by assembling a suitable
sublattice, and partition the sum over  an exponential range
 according to an enumeration
of the intersections of  cosets of the sublattice with this range.

%------------------

\section{Preliminaries and notations}\label{sec:preliminary}

A constraint function $f$ of arity $k$
is a map $\{0,1\}^k  \rightarrow \mathbb{C}$.
Fix a set of constraint functions $\mathcal{F}$. A signature grid
$\Omega=(G, \pi)$
 is a tuple, where $G = (V,E)$
is a graph, $\pi$ labels each $v\in V$ with a function
$f_v\in\mathcal{F}$ of arity ${\operatorname{deg}(v)}$,
and the incident edges
$E(v)$ at $v$ with input variables of $f_v$.
% Each
%$f_v$ maps
%$\{0, 1\}^{\operatorname{deg}(v)}$ to $\mathbb{C}$.
 We consider all 0-1 edge assignments $\sigma$,
each gives an evaluation
$\prod_{v\in V}f_v(\sigma|_{E(v)})$, where $\sigma|_{E(v)}$
denotes the restriction of $\sigma$ to $E(v)$. The counting problem on the instance $\Omega$ is to compute
Holant$_{\Omega}=\sum_{\sigma:E\rightarrow\{0, 1\}}\prod_{v\in V}f_v(\sigma|_{E(v)})$.
The Holant problem parameterized by the set $\mathcal{F}$ is denoted by Holant$(\mathcal{F})$. We denote by Holant$(\mathcal{F} \mid \mathcal{G})$
the Holant problem on bipartite graphs where signatures from $\mathcal{F}$
and $\mathcal{G}$ are assigned to vertices from the Left and Right.

A spin system on $G = (V, E)$ has a variable for every $v \in V$
and a binary function $g$ for every edge $e \in E$.
The partition function is $\sum_{\sigma: V \rightarrow\{0, 1\}}
\prod_{(u, v) \in E} g(\sigma(u), \sigma(v))$.
Spin systems are special cases of \#CSP$(\mathcal{F})$
(counting CSP)
where $\mathcal{F}$ consists of a single binary function.
In turn,  \#CSP$(\mathcal{F})$ is the special case of
Holant  where  $\mathcal{F}$ contains {\sc Equality} of all arities.

%In this paper we deal with orientation problems, which
%can be expressed as bipartite Holant problems Holant$(\ne_2 \mid \mathcal{F})$.

A constraint function is also called a signature.
A function $f$ of arity $k$ can be represented by listing its values in lexicographical order as in a truth table, which
is a vector in
$\mathbb{C}^{2^{k}}$, or as a tensor in $(\mathbb{C}^2)^{\otimes k}$, or as a matrix in $\mathbb{C}^{2^{k_1}}\times \mathbb{C}^{2^{k_2}}$
if we partition the $k$ variables to two parts,  where $k_1+k_2=k$.
A function is symmetric if its value depends only
on the Hamming weight of its input.
A symmetric function $f$ on $k$ Boolean variables can
be expressed as
$[f_0, f_1, \ldots, f_k]$,
where $f_w$ is the value of $f$ on inputs of Hamming weight $w$.
For example, $(=_k)$ is the {\sc Equality} signature $[1, 0, \ldots, 0, 1]$
(with $k-1$ 0's) of arity $k$.
We use $\neq_2$ to denote binary {\sc Disequality} function $[0,1,0]$.
The support of a function $f$ is the set of inputs on which $f$ is nonzero.
%We can treat the support of $f$ as a relation or a $0,1$-valued function.
%%% JYC not used. i think

Given an instance $\Omega=(G, \pi)$ of Holant$(\mathcal{F})$, we add a middle point on each edge as a new vertex to $G$, then each edge becomes a path of length two through the new vertex. Extend $\pi$ to label a function $g$ to each new vertex. This gives a bipartite Holant problem Holant$(g\mid \mathcal{F})$. It is obvious  that Holant$(=_2\mid \mathcal{F})$ is equal to Holant$(\mathcal{F})$.

For $T \in {\rm GL}_2({\mathbb{C}})$
 and a signature $f$ of arity $n$, written as
a column vector  $f \in \mathbb{C}^{2^n}$, we denote by
$T^{-1}f = (T^{-1})^{\otimes n} f$ the transformed signature.
  For a signature set $\mathcal{F}$,
define $T^{-1} \mathcal{F} = \{T^{-1}f \mid  f \in \mathcal{F}\}$.
For signatures written as
 row vectors we define $\mathcal{F} T$ similarly.
%Whenever we write $T^{-1} f$ or $T^{-1} \mathcal{F}$,
%we view the signatures as column vectors;
%similarly for $f T$ or $\mathcal{F} T$ as row vectors.
%
%
%Let $T \in {\rm GL}_2({\mathbb{C}})$.
The holographic transformation defined by $T$ is the following operation:
given a signature grid $\Omega = (H, \pi)$ of $\holant{\mathcal{F}}{\mathcal{G}}$,
for the same bipartite graph $H$,
we get a new signature grid $\Omega' = (H, \pi')$ of $\holant{\mathcal{F} T}{T^{-1} \mathcal{G}}$ by replacing each signature in
$\mathcal{F}$ or $\mathcal{G}$ with the corresponding signature in $\mathcal{F} T$ or $T^{-1} \mathcal{G}$.

In this paper we focus on Holant$(\neq_2 \mid f)$ when $f$ has support among strings of hamming weight $2$. They are
the six-vertex models on general graphs. This
%problem set
 corresponds to a set of (non-bipartite)
 Holant problems by a holographic reduction \cite{Val08}. Let $Z=\frac{1}{\sqrt{2}} \left[\begin{smallmatrix}
1 & 1 \\
\mathfrak{i} & -\mathfrak{i}
\end{smallmatrix}\right]$. The matrix form of $\neq_2$ is $\left[\begin{smallmatrix}
0 & 1 \\
 1 & 0
\end{smallmatrix}\right]=Z^{\tt T}Z$.
Under a holographic transformation with bases $Z$, Holant$(\neq_2 \mid f)$ becomes Holant$(=_2 \mid Z^{\otimes 4}f)$, where $Z^{\otimes 4}f$ is a column vector $f$ multiplied by the matrix tensor power $Z^{\otimes 4}$.
%Function $Z^{\otimes 4} f$ is symmetric iff $f$ is symmetric.
%Although we focus on
The bipartite Holant problems of the form Holant$(\neq_2 \mid f)$
%, they
naturally correspond to the non-bipartite Holant problems Holant$(Z^{\otimes 4}f)$.
In general $f$ and $Z^{\otimes 4}f$ are non-symmetric functions.

%We also use a $4 \times 4$ matrix to denote a signature of arity 4,
A signature $f$ of arity 4 has the signature matrix
$M = M_{x_1x_2, x_4x_3}(f)=\left[\begin{smallmatrix}
f_{0000} & f_{0010} & f_{0001} & f_{0011}\\
f_{0100} & f_{0110} & f_{0101} & f_{0111}\\
f_{1000} & f_{1010} & f_{1001} & f_{1011}\\
f_{1100} & f_{1110} & f_{1101} & f_{1111}
\end{smallmatrix}\right]
$.
If $\{i,j,k,\ell\}$ is a permutation of $\{1,2,3,4\}$,
then the $4 \times 4$ matrix $M_{x_ix_j, x_kx_{\ell}}(f)$ lists the 16 values
 with row  index $x_ix_j \in\{0, 1\}^2$
and column index
$x_kx_{\ell} \in\{0, 1\}^2$ in lexicographic order.

%In the present paper,
Let $\mathbb{N} = \{0,1,2,\ldots\}$,
$H=\frac{1}{\sqrt{2}}
\left[\begin{smallmatrix}
1 & 1 \\
 1 & -1
\end{smallmatrix}\right]
$ and
$N=
\left[\begin{smallmatrix}
0 & 1 \\
 1 & 0
\end{smallmatrix}\right]
\otimes
\left[\begin{smallmatrix}
0 & 1 \\
 1 & 0
\end{smallmatrix}\right]
=
\left[\begin{smallmatrix}
 0 & 0 & 0 & 1 \\
 0 & 0 & 1 & 0 \\
 0 & 1 & 0 & 0 \\
 1 & 0 & 0 & 0 \\
\end{smallmatrix}\right]$.
Note that $N$ is the double {\sc Disequality},
which is the function of connecting two pairs of edges by $(\not =_2)$.

If $f$ and $g$ have signature matrices
 $M(f) = M_{x_ix_j, x_k x_{\ell}}(f)$ and $M(g) =
M_{x_{s}x_{t}, x_{u}x_{v}}(g)$,
by connecting $x_k$ to $x_{s}$, $x_{\ell}$ to $x_t$,
both with  {\sc Disequality}  $(\not =_2)$, we get a signature
of arity 4 with the signature matrix
%$M_{x_ix_j, x_kx_{\ell}}(f)N M_{x_{s}x_{t}, x_{u}x_{v}}(g)$,
$M(f) N M(g)$
by matrix product
with row index $x_ix_j$ and column index $x_{u}x_{v}$.

The six-vertex model is Holant$( \not =_2 \mid f)$,
where $M_{x_1x_2, x_4x_3}(f) =
\left[\begin{smallmatrix}
0 & 0 & 0 & a \\
0 & b & c & 0 \\
0 & z & y & 0 \\
x & 0 & 0 & 0
\end{smallmatrix}\right]$.
We also write this matrix by $M(a,x,b,y,c,z)$. When $a=x, b=y$ and $c=z$, we abridge it as $M(a,b,c)$.
Note that all nonzero entries of $f$ are on Hamming weight 2.
Denote the 3 pairs of
 ordered complementary strings
%%%%%%%%%%JYC, since in lm 4.1 we use alpha ...
%for actual entry values. i change them. al--> la, be --> mu , gamma --> nu
%%%%%%%%%
% by $\alpha = 0011, \overline{\alpha} = 1100$,
%$\beta = 0110, \overline{\beta} = 1001$, and $\gamma = 0101, \overline{\gamma}
%= 1010$.
%The support of $f$ is the union of the pairs $(\alpha, \overline{\alpha})$,
%$(\beta, \overline{\beta})$ and $(\gamma, \overline{\gamma})$,
%on which $f$ has values
%$(a,x), (b,y)$ and $(c,z)$.
%If $f$ has the same value in  a pair,
% say $a=x$ on $\alpha$ and $\overline{\alpha}$,
% we say it is a twin.
%%%%%%%%%%%%%%%%%%%%%%
 by $\lambda = 0011, \overline{\lambda} = 1100$,
$\mu = 0110, \overline{\mu} = 1001$, and $\nu = 0101, \overline{\nu}
= 1010$.
The support of $f$ is the union
$\{\lambda, \overline{\lambda}, \mu, \overline{\mu}, \nu, \overline{\nu}\}$
 of the pairs $(\lambda, \overline{\lambda})$,
$(\mu, \overline{\mu})$ and $(\nu, \overline{\nu})$,
on which $f$ has values
$(a,x), (b,y)$ and $(c,z)$.
If $f$ has the same value in  a pair,
 say $a=x$ on $\lambda$ and $\overline{\lambda}$,
 we say it is a twin.

The permutation group $S_4$ on $\{x_1, x_2, x_3, x_4\}$
induces a group action on $\{s \in \{0, 1\}^4
\mid {\rm wt}(s) =2\}$ of size 6.
This is a faithful representation of $S_4$ in $S_6$.
% Every nontrivial $\sigma \in S_4$
%induces a nontrivial. For, if j=sigma(i) \not =i, then there is
% a string s: s_i =0, s_j = 1, rest have equal wt, so total wt 2.
%%and sigma(s) \not = s.
Since the action of $S_4$ preserves complementary pairs,
%strings,
this group action has nontrivial blocks of imprimitivity,
%namely $\{\{0011,1100\}, \{0110,1001\}, \{0101,1010\}\}$.
namely $\{A,B,C\} =
\{\{\lambda, \overline{\lambda}\},
\{\mu, \overline{\mu}\}, \{\nu, \overline{\nu}\}\}$.
The action on the blocks is a homomorphism of $S_4$ onto
$S_3$, with kernel $K = \{1, (12)(34), (13)(24), (14)(23)\}$.
In particular one can calculate that the subgroup
$S_{\{2,3,4\}} = \{1, (23), (34), (24), (243), (234)\}$
maps to $\{1, (AC), (BC), (AB), (ABC), (ACB)\}$.
%%% actually even more explicit:
% 1, (ac)(a'c'), (bc)(b'c'), (ab)(a'b'), (abc)(a'b'c'), (acb)(a'c'b')
%To summerize,
By a permutation from $S_4$,
we may permute the matrix $M(a,x,b,y,c,z)$
by any permutation on the values
$\{a,b,c\}$ with the corresponding permutation on $\{x,y,z\}$,
and moreover we can further flip
%within
an even number
of pairs $(a,x)$, $(b,y)$
and $(c,z)$.
In particular, we can
arbitrarily reorder the  three rows in
$\left[\begin{smallmatrix}
a & x \\
b & y \\
c & z
\end{smallmatrix}\right]$, and
we can also reverse the order of
 arbitrary two rows together. In the proof, after one construction, we may use this property to get a
 similar construction and conclusion, by quoting this
symmetry of three pairs or six values.

 \begin{definition}
 A 4-ary signature is redundant iff in its  4 by 4 signature matrix
  the middle two rows are identical and the
 middle two columns are identical.
 \end{definition}
\begin{theorem}\cite{caiguowilliams13}\label{redundant}
If $f$ is a redundant signature and the determinant
\[ \det \left[\begin{matrix}
f_{0000} & f_{0010} & f_{0011}\\
f_{0100} & f_{0110} & f_{0111}\\
f_{1100} & f_{1110} & f_{1111}
\end{matrix}\right] \not = 0, \]
 then $\operatorname{Holant}(\neq_2|f)$ is \#P-hard.
\end{theorem}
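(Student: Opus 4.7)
My plan is to exploit the redundancy of $f$ to reduce $\Holant(\neq_2 \mid f)$ to a counting graph homomorphism problem on a ternary domain, and then invoke a known complex-weighted dichotomy to conclude \#P-hardness.

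First, the redundancy condition forces $f$ to be invariant under the swaps $x_1 \leftrightarrow x_2$ and $x_3 \leftrightarrow x_4$, so $f$ factors through the three-dimensional symmetric subspace $\operatorname{Sym}^2(\mathbb{C}^2)$ in each of the two input pairs. Indexing the symmetric subspace by the ternary alphabet $\{0,1,2\}$ according to the Hamming weight of the pair, $f$ corresponds to a binary ternary-domain signature $\tilde f$ whose $3 \times 3$ matrix is precisely the one in the hypothesis. The nonzero-determinant assumption therefore translates to $\tilde f$ being a full-rank operator on $\mathbb{C}^3$.

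Next, I would realize this compression by a gadget reduction. Fixing a pairing of the four half-edges at each vertex consistent with the signature-matrix decomposition of $f$, each pair becomes a ternary ``super-edge''. When two super-edges meet across the underlying graph via the paired $\neq_2$'s, the induced constraint is a symmetric binary signature $h$ on $\{0,1,2\}$ with matrix $\left[\begin{smallmatrix} 0 & 0 & 1 \\ 0 & 2 & 0 \\ 1 & 0 & 0 \end{smallmatrix}\right]$; the central entry $2$ counts the two ways two weight-one pairs can simultaneously satisfy the two underlying $\neq_2$'s. Under this reduction $\Holant(\neq_2 \mid f)$ becomes a bipartite ternary Holant problem with $\tilde f$ and $h$, and a holographic transformation on $\{0,1,2\}$ exploiting the invertibility of $h$ further converts this into a pure graph homomorphism partition function $Z_A$ whose $3\times 3$ weight matrix $A$ is congruent to $\tilde f$ and therefore still invertible.

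Finally, I would invoke the complexity dichotomy for counting complex-weighted graph homomorphisms on the ternary domain. The tractable cases impose very restrictive structural conditions --- essentially block decompositions with rank-one blocks, or gauge-equivalence to highly symmetric matrices --- which a generic invertible $A$ will violate, yielding \#P-hardness. The main obstacle is handling borderline invertible $A$ that happen to lie in one of the tractable classes: these would require a direct reduction via polynomial interpolation, chaining $\tilde f$'s together through $h$ to span the algebra generated by $\tilde f$ (using the full-rank hypothesis to guarantee that the eigenvalues behave well), and thereby realizing an explicitly hard ternary signature, such as a generic ferromagnetic $3 \times 3$ weight matrix known to be hard from prior dichotomies.
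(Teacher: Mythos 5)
Your compression of $f$ to a binary ternary-domain signature $\tilde f$ via the symmetric subspace, and your computation of the paired signature $h$ with matrix $\left(\begin{smallmatrix} 0 & 0 & 1 \\ 0 & 2 & 0 \\ 1 & 0 & 0 \end{smallmatrix}\right)$, are both correct and capture the right ``domain pairing'' picture. (Note also that the paper does not prove this statement; it cites it from Cai--Guo--Williams.) However, the plan has a genuine gap at the very next step, and it is not a borderline technicality: after pairing, \emph{both} $\tilde f$ and $h$ are arity-$2$ signatures on $\{0,1,2\}$. A bipartite Holant problem in which every signature has arity $2$ lives on $2$-regular bipartite graphs, i.e.\ disjoint unions of even cycles, and its value is just a product of traces $\operatorname{tr}\bigl((\tilde f\, h)^k\bigr)$ over cycle lengths $k$ --- trivially polynomial time. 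So the ternary problem you land on is tractable, and no holographic transformation by $h^{1/2}$ can change that; a graph-homomorphism partition function $Z_A$ over an \emph{arbitrary} source graph implicitly carries ternary {\sc Equality} of every arity at the vertices, and you have produced nothing of arity $\ge 3$ that could simulate those equalities.

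The missing ingredient is a gadget in $\operatorname{Holant}(\neq_2\mid f)$ realizing a ternary signature of arity at least $3$ (e.g., a weighted ternary equality), together with the correct direction of reduction: you should reduce a hard ternary problem \emph{to} $\operatorname{Holant}(\neq_2\mid f)$ by constructing the instance so that the pairing is forced, rather than claiming $\operatorname{Holant}(\neq_2\mid f)$ ``becomes'' a ternary problem --- an arbitrary $4$-regular input graph admits no canonical consistent pairing of its edges, so the forward direction does not work. Chaining $\tilde f$'s through $h$'s or interpolating inside the matrix algebra generated by $\tilde f$, as suggested in your last paragraph, remains confined to arity $2$ and cannot escape the cycle restriction. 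Until the higher-arity ternary constraint is realized, the nonsingularity hypothesis on the compressed $3\times 3$ matrix has not been used in a way that produces any hardness.
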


We use $\mathscr{A}$ and $\mathscr{P}$ to denote two classes of tractable signatures. The classes
$\mathscr{A}$ and $\mathscr{P}$ are
identified as tractable for \#CSP \cite{cailuxia-2014}. Problems defined by $\mathscr{A}$ are tractable essentially by Gauss Sums (See Theorem~6.30 of
\cite{caichenliptonlu}).
The signatures in $\mathscr{P}$ are tensor products of signatures whose supports are among two complementary
bit vectors. Problems defined by them are tractable by a propagation algorithm.
 %Valiant
%\cite{val02a, val02b} introduced matchgates, which we denote by $\mathscr{M}$. They can be locally expressed by weighted perfect
%matchings, so problems defined by them are tractable by the FKT algorithm over planar graphs.
The full
version \cite{caifuguowilliam} contains complete definitions and characterizations of these classes.

\begin{theorem}\cite{cailuxia-2014}\label{csp:dichotomy}
Let $\mathcal{F}$ be any set of complex-valued signatures in Boolean variables. Then $\operatorname{\#CSP}(\mathcal{F})$
is \#P-hard unless
$\mathcal{F}\subseteq\mathscr{A}$ or
$\mathcal{F}\subseteq\mathscr{P}$,
  in which case the problem is computable in polynomial time.
\end{theorem}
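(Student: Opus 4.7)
The plan is to establish both the tractable algorithms and the matching \#P-hardness. For tractability of $\mathcal{F}\subseteq\mathscr{P}$, each $f\in\mathscr{P}$ factors as a tensor product of signatures supported on a pair of complementary Boolean strings. On any signature grid, a single assignment to one variable propagates through such constraints and partitions the remaining variables into equivalence classes of ``forced equal'' or ``forced disequal'' type, so the Holant sum factors across connected components and each component contributes a single easily-evaluated product. For tractability of $\mathcal{F}\subseteq\mathscr{A}$, each signature has (up to scalar) the form $\chi_L(x)\cdot\mathfrak{i}^{Q(x)}$ where $L$ is an affine subspace of $\mathbb{F}_2^k$ and $Q$ is a degree-$2$ polynomial. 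The \#CSP sum then collapses to a Gauss-type exponential sum over an affine subspace defined by the combined constraints, computable in polynomial time by the method behind Theorem~6.30 of \cite{caichenliptonlu}.

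For the \#P-hardness direction, exploit that $\#\operatorname{CSP}(\mathcal{F})$ freely provides all equalities $=_k$ as auxiliary functions, giving a rich gadget library. First I would reduce to the symmetric case: given any $f\in\mathcal{F}\setminus(\mathscr{A}\cup\mathscr{P})$, attach equality signatures to pairs of inputs to build a symmetric signature, and argue that if every such symmetrization lies in $\mathscr{A}\cup\mathscr{P}$ then $f$ itself must have lain in one of these classes (a holographic-invariance argument). Next I would handle binary symmetric signatures as the base case using the complex Boolean binary dichotomy (the Dyer--Greenhill-type result): any binary signature outside the tractable families yields \#P-hardness by reduction from known hard counting problems such as $\#$Vertex-Cover. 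For a symmetric signature of arity $\geq 3$, construct binary ``loop'' gadgets by plugging equalities into all but two inputs and consider the recurrence on the resulting symmetric signatures obtained by iterating the gadget; polynomial interpolation on the orbit lets us realize a dense set of binary signatures, reducing to the base case.

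The main obstacle is the exceptional case analysis where the natural gadgets degenerate or interpolation fails. The recurrence defining the iterated gadget may have eigenvalues on the unit circle at roots of unity, where interpolation recovers only finitely many values; or the symmetrized signature may degenerate into a tensor product. Each such exception must be examined individually and either shown to place $f$ inside $\mathscr{A}\cup\mathscr{P}$ after a suitable holographic transformation, or routed through an alternative gadget (for instance, a non-symmetric pinning via a partially degenerate auxiliary). The complex-valued regime is substantially harder than the real one precisely because modulus-one eigenvalues are generic, forcing a finer classification keyed to the finite-order structure of the underlying $\mathbb{Z}_4$-affine-quadratic description of $\mathscr{A}$; resolving these degeneracies is the technical core of \cite{cailuxia-2014}, and is the step I expect to consume the bulk of the argument.
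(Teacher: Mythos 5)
The paper does not prove Theorem~\ref{csp:dichotomy}: it is quoted verbatim from \cite{cailuxia-2014} and used as a black box (together with Theorems~\ref{redundant} and~\ref{holant*:dichotomy}) to drive the six-vertex classification. There is therefore no in-paper proof to compare against, but your sketch can still be checked against what the cited theorem actually requires.

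Your tractability direction is essentially right in spirit. The hardness direction, however, has a real gap at the step ``reduce to the symmetric case.'' The claim that if every symmetrization of $f$ obtained by attaching equalities lies in $\mathscr{A}\cup\mathscr{P}$ then $f$ itself must lie there is false, and the same failure afflicts the loop-gadget step you describe for arity $\geq 3$. Take $f(x_1,x_2,x_3) = (-1)^{x_1 x_2 x_3}$, already symmetric with values $[1,1,1,-1]$. Its cubic phase is not affine and it is not a tensor product of complementary-support binary pieces, so $\operatorname{\#CSP}(f)$ is \#P-hard by the theorem. Yet attaching $=_2$ to any pair of its inputs produces $(-1)^{x x'}$, a binary affine signature, and any further contraction only gives more affine signatures. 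Every gadget you propose to build by equality contraction lands inside $\mathscr{A}$, so your interpolation scheme cannot realize anything hard, even though $f$ is hard. The arity-reduction strategy on its own is simply blind to degree-$\geq 3$ phase obstructions. The proof in \cite{cailuxia-2014} does not proceed this way: it first establishes a pinning lemma making the constants $[1,0]$ and $[0,1]$ available, then proves structural characterizations of membership in $\mathscr{A}$ and $\mathscr{P}$ in terms of conditions on pinned restrictions of $f$, and extracts hardness from a carefully chosen combination of such restrictions rather than from equality contraction alone. The appeal to a ``Dyer--Greenhill-type'' binary base case is also off target: that result concerns unweighted graph homomorphism counting, whereas the binary case here is complex-weighted and is handled by different means within the Cai--Lu--Xia line of work.
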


\begin{definition}
$\mathscr{M}$ is the set of functions, whose support is composed of strings of Hamming weight at most one.
$\mathscr{M'}=\{g \mid \exists f \in \mathscr{M}, ~ g(\mathbf{x})=f(\overline{\mathbf{x}})\}$, where $\overline{\mathbf{x}}$ is the complement of $\mathbf{x}$.
\end{definition}

Note that all unary functions are in $\mathscr{M} \cap \mathscr{M'}$.
Theorem \ref{holant*:dichotomy} is a consequence of Theorem 2.2 in~\cite{cailuxia-2011}.

\begin{theorem}\label{holant*:dichotomy}
$\operatorname{Holant}(\neq_2\mid\mathscr{M} )$
and  $\operatorname{Holant}(\neq_2\mid \mathscr{M'} )$
are polynomial
time computable.
%%% I removed the tensor closure part. JYC
 %$\operatorname{Holant}(\neq_2\mid\langle \mathscr{M} \rangle)$
%and  $\operatorname{Holant}(\neq_2\mid\langle \mathscr{M'} \rangle)$
%are P-time computable.
\end{theorem}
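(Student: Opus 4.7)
My plan is to reduce the problem to a purely combinatorial one and solve it directly, after collapsing the $\mathscr{M}'$ case onto the $\mathscr{M}$ case by a bit-flip. The map $\mathbf{x}\mapsto\overline{\mathbf{x}}$, induced by the holographic transformation $X=\left[\begin{smallmatrix}0 & 1 \\ 1 & 0\end{smallmatrix}\right]$, preserves $\neq_2$ (since $\neq_2(\overline{x},\overline{y})=\neq_2(x,y)$) and sends $\mathscr{M}$ bijectively onto $\mathscr{M}'$, so the two Holant problems are polynomial time equivalent and it suffices to treat $\operatorname{Holant}(\neq_2\mid\mathscr{M})$.

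In the bipartite picture, each $\neq_2$-vertex subdivides an edge between two $\mathscr{M}$-vertices and forces the two half-edges to receive opposite bits. Identifying the $1$-half-edge with the head of a directed edge recasts each satisfying assignment as an orientation of the underlying graph $G$ on the $\mathscr{M}$-vertex set. Since every $f_v\in\mathscr{M}$ is supported on Hamming weight at most $1$, the local constraint at $v$ is simply that $v$'s in-degree is at most $1$. Writing $a_v=f_v(\mathbf{0})$ and $b_{v,e}=f_v(\mathbf{e}_e)$, the partition function becomes
\[
Z_\Omega=\sum_{\substack{\text{orientations of }G\\\text{with in-degree }\leq 1\text{ everywhere}}}\ \prod_{v:\ d^-(v)=0}a_v\ \prod_{v:\ d^-(v)=1}b_{v,\,\text{in-edge of }v},
\]
where $d^-(v)$ denotes in-degree.

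I then compute $Z_\Omega$ connected component by connected component. In any component $C$, the in-degrees sum to $|E_C|$, so the contribution vanishes whenever $|E_C|>|V_C|$; otherwise $C$ is either a tree ($|E_C|=|V_C|-1$) or unicyclic ($|E_C|=|V_C|$). In the tree case, exactly one vertex has in-degree $0$ and the orientation is the unique one directing all edges away from that vertex, so one sums over the choice of this root. In the unicyclic case, every vertex has in-degree exactly $1$, which forces the unique cycle to be oriented consistently (two choices) with each pendant tree oriented away from its cycle-anchor. A standard two-state tree dynamic program—tracking, for each subtree, the weighted sums conditioned on whether the parent edge already supplies the in-edge to the subtree's root—then computes each component's contribution in linear time, and the contributions multiply over components. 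The main technical step is the bookkeeping for this DP, a routine case analysis once the orientation reformulation is in hand. Alternatively, the theorem follows at once from Theorem~2.2 of~\cite{cailuxia-2011}, where $\mathscr{M}$ appears as one of the tractable classes in the Holant$^*$ dichotomy.
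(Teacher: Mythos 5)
Your proof is correct and genuinely more self-contained than what the paper does. The paper proves nothing here: it simply notes, just before the theorem statement, that the result is a consequence of Theorem~2.2 of Cai--Lu--Xia's Holant$^*$ dichotomy \cite{cailuxia-2011} --- exactly the citation you offer as an alternative in your last sentence. Everything before that last sentence is your own argument, and it works. The bit-flip holographic transformation by $X=\left[\begin{smallmatrix}0&1\\1&0\end{smallmatrix}\right]$ fixes $\neq_2$ (the matrix of $\neq_2$ is $X$ itself and $XXX=X$) and carries $\mathscr{M}$ onto $\mathscr{M}'$ by definition, so the two problems are polynomial-time equivalent. The reinterpretation of satisfying assignments as orientations with all in-degrees at most~$1$ is exact, and the edge-count argument --- a connected component contributes zero unless $|E_C|\le|V_C|$, forcing it to be a tree or unicyclic --- is sound even for multigraphs with self-loops and parallel edges. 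Each component's contribution is then computable by a linear-time dynamic program, or even by a naive sum over the $|V_C|$ choices of in-degree-$0$ root for trees and the two cycle orientations for unicyclic components. The trade-off is clear: the paper's citation is terse and leans on a heavy dichotomy theorem, while your orientation argument is elementary, exhibits an explicit polynomial-time algorithm, and would serve better in an expository treatment.
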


\section{Main theorem}

\begin{theorem}
Let $f$ be a 4-ary signature with the signature matrix
$M_{x_1x_2, x_4x_3}(a, x, b, y, c, z)$, then
Holant$(\neq_2\mid f)$ is \#P-hard except for the following cases:
\begin{itemize}
\item $f\in\mathscr{P}$;
\item $f\in\mathscr{A}$;
\item there is a zero in each pair $(a,x), (b,y), (c,z)$;
\end{itemize}
in which cases Holant$(\neq_2\mid f)$ is computable in polynomial time.
\end{theorem}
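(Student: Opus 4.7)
The plan is to verify tractability in each of the three listed exceptional cases and establish \#P-hardness in all other situations. For $f\in\mathscr{P}$ or $f\in\mathscr{A}$ the tractability follows from Theorem~\ref{csp:dichotomy} once we observe that $\neq_2$ itself lies in both $\mathscr{P}$ and $\mathscr{A}$, so $\{\neq_2,f\}$ remains inside the tractable class and the bipartite problem is handled by the propagation algorithm (for $\mathscr{P}$) or by Gauss-sum evaluation (for $\mathscr{A}$). For the third case, the $S_4$-action on pairs recalled in the Preliminaries lets me normalize to either $x=y=z=0$ or $a=b=c=0$; in these normal forms the support of $f$ is concentrated on strings sharing a coordinate, and after absorbing the incident $\neq_2$ constraints (possibly combined with a holographic transformation by a diagonal matrix) the signature falls into $\mathscr{M}$ or $\mathscr{M}'$, so Theorem~\ref{holant*:dichotomy} finishes.

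For hardness I would split according to whether the three pairs $(a,x)$, $(b,y)$, $(c,z)$ are all \emph{twins}. In the twin case $a=x$, $b=y$, $c=z$, we have $f(\mathbf{x})=f(\overline{\mathbf{x}})$, which is the zero-field six-vertex model. The complementation invariance lets me identify each original edge with a single binary variable (its orientation), turning Holant$(\neq_2\mid f)$ into the partition function of a spin system whose local constraint is built directly from $(a,b,c)$. This implements the direct connection between six-vertex models and spin systems advertised in the introduction, and Theorem~\ref{csp:dichotomy} then yields \#P-hardness unless this derived constraint lies in $\mathscr{A}\cup\mathscr{P}$, which under the twin hypothesis should coincide with $f\in\mathscr{A}\cup\mathscr{P}$.

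In the non-twin case (WLOG $a\neq x$) I plan to use gadgets and interpolation to reduce to the twin case. Connecting two copies of $f$ via the double-disequality $N$ yields a new 4-ary signature with signature matrix $MNM$, and iterating gives a family $(MN)^{n}M$ indexed by $n$; the spectral decomposition of $MN$ expresses each entry as a linear combination of $n$-th powers of its eigenvalues, so polynomial interpolation lets me recover arbitrary signatures in the span, in particular a twin signature. Along the way, whenever the chain produces a redundant signature with nonzero $3\times 3$ determinant, Theorem~\ref{redundant} applies directly and shortcuts the argument.

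The main obstacle is the interpolation step when the spectrum of $MN$ is degenerate (eigenvalue ratios that are roots of unity, equal moduli, or other resonances), since the Vandermonde-type system becomes singular. To handle these bad spectra I would invoke the novel interpolation method previewed in the introduction: assemble a suitable integer sublattice $L$ in the multi-exponent index space, partition an exponentially large range of gadget indices into cosets of $L$, and use an enumeration of the coset intersections within this range to carve out subsums of the partition function whose hardness can be inherited from a known \#P-hard target. Designing $L$ for each resonant spectral pattern and verifying that the carved subsums recombine to recover a hard instance is, I expect, the bulk of the technical work in the full proof.
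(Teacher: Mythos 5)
Your decomposition into twins versus non-twins does not match the paper's, and the mismatch is not benign: the twin-case argument as stated is wrong. Complement-invariance of $f$ (all three pairs twins) does \emph{not} turn Holant$(\neq_2\mid f)$ into a spin system. After orienting each edge (which the $\neq_2$ already does), the constraint at every vertex is still a $4$-ary function of its four incident orientations; nothing about $f(\mathbf{x})=f(\overline{\mathbf{x}})$ collapses it to a binary edge interaction between vertex spins. The spin-system connection advertised in the introduction is in fact used for the \emph{zero-pair} case, i.e.\ when some pair equals $(0,0)$: then $f$ factors as $f(x_1,x_2,x_3,x_4)=g(x_1,x_4)\cdot \chi_{x_1\ne x_2}\cdot\chi_{x_3\ne x_4}$ for a binary $g$, and one shows $\#\mathrm{CSP}(g)\le_{\mathrm T}\operatorname{Holant}(\neq_2\mid f)$ by replacing each variable by a cycle (with $=_3$ on its nodes) and contracting through the $g$-nodes. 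The all-nonzero twin case is proved hard in the paper by a completely different route: the chain gadget $D_s=M(NM')^{s-1}$, a conjugation by $\mathrm{diag}(1,H,1)$ to diagonalize the middle block, and interpolation feeding into the redundant-signature determinant criterion (Theorem~\ref{redundant}).

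Even granting a correct hardness proof for the nonzero twin case, your plan to reduce the non-twin cases to it by composing $f$ with itself via $N$ has a structural obstruction you have not addressed: any zero pair of $M$ is preserved by $MNM$, $MNM'$, $M(NM)^k$, and so on, because the corresponding all-zero row/column of $M$ forces the same row/column of the product to be zero. So if $a=x=0$ (a zero pair) you can never realize a twin signature with all three twins nonzero by these gadgets, and your interpolation target is unreachable; this is exactly why the paper separates the one-zero-pair case out front and treats it by the $\#\mathrm{CSP}(g)$ reduction described above. There is a similar (smaller) gap on the other end: you need to know \emph{which} twin signatures are $\#$P-hard to have a meaningful interpolation target, and nothing in the twin/non-twin split tells you this; the paper gets it from Theorem~\ref{redundant} applied to specific matrices such as $M(2,1,1)$. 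The coset-interpolation idea you sketch for degenerate spectra is the right tool and genuinely is what the paper develops (Lemma~\ref{lem:interpolation}), but it has to be pointed at a concrete hard target rather than at ``arbitrary signatures in the span,'' which the rank deficiency prevents you from reaching.
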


We prove the complexity classification by categorizing the
six values $a,b,c,x,y,z$  in the following way.
%  handling the flowing cases of the six values $a,b,c,x,y,z$ one by one.

\begin{enumerate}
\item\label{one-zero-pair}
 There is a zero pair. If $f\in\mathscr{A}\cup \mathscr{P}$,
then  it is tractable. Otherwise it is  \#P-hard.

%(In this case, additional zeros are allowed. Below we assume there is no
%zero pair.)

\item\label{all-nonzero-case}
All values in  $\{a,x,b,y,c,z\}$ are nonzero. We prove these are \#P-hard.
\begin{enumerate}
\item\label{subcase-a}  Three twins. We prove this case mainly by an
interpolation reduction from redundant signatures, then
apply Theorem~\ref{redundant}.
%which are \#P-hard according to Theorem \ref{redundant}.
\item  There is one pair that is not twin. We prove this by a
reduction from Case~\ref{subcase-a}.
\end{enumerate}

%(Below we assume there is at least one zero, but there is no
%zero pair.)

\item\label{exactly-one-zero}
There is exactly one zero in $\{a,x,b,y,c,z\}$. All are \#P-hard by reducing from Case~\ref{all-nonzero-case}.

%(Below we assume there are at least two zeros, and there is no
%zero pair.)

\item\label{exactly-two-zeros}
There are exactly two zeros which are from different pairs. All are \#P-hard by reducing from Case~\ref{all-nonzero-case}.

%(This finishes off the case when there are exactly two zeros since there
%is no zero pair. Below we assume there are at least three zeros,
%and again because there is no
%zero pair, the last case is there are exactly three zeros in three distinct
%pairs.)

\item\label{one-zero-in-each-pair}
There is one zero in each pair. These are tractable according to Theorem \ref{holant*:dichotomy}.
\end{enumerate}

By definition, in
 Case~\ref{one-zero-pair} and Case~\ref{one-zero-in-each-pair}, 
$f$ may have more zero values than the stated ones. 

These cases above cover all possibilities:
After Case~\ref{one-zero-pair}  we may assume that there is no
zero pair. Then after Case~\ref{all-nonzero-case} 
we may assume there is at least one zero and there is no
zero pair. Similarly after Case~\ref{exactly-one-zero} we may assume 
there are at least two zeros and there is no
zero pair. So Case~\ref{exactly-two-zeros} finishes off the case 
when there are exactly two zeros.
 After Case~\ref{exactly-two-zeros} we may assume  there are at least
three zeros, but there is no zero pair. Therefore
we may assume the only case remaining is where there are
exactly three zeros in three distinct
pairs, and Case~\ref{one-zero-in-each-pair} finishes the proof.

%In the last two cases, $f$ may have more zero values than the stated ones.
%These cases above cover all possibilities:
%If there are exactly two 0's, then either it is in case 3 or 4.
%If there are at least three 0's, then either it is in case 4, or it is
%in case 5.

%%%%%% there is a slight technical issue in choosing to name the cases
% I stayed with the original partition formally in defining the cases
% which have overlaps. But I added last comment to clarify the logical coverage

In the following  we prove the 5 cases to prove the main theorem.
%We prove the 5 cases one by one in the following sections, to achieve the proof of the main theorem.

\section{Case~\ref{one-zero-pair}: One zero pair}
In this section we prove Case~\ref{one-zero-pair}.
Note that by renaming the variables $x_1, x_2, x_3, x_4$
we may assume the signature $f$ of arity 4 with one zero pair has the  form in
(\ref{one-zero-pair-matrix-form}).

\begin{lemma}\label{spin}
Let $f$ be a 4-ary signature with the signature matrix
%\begin{center}
\begin{equation}\label{one-zero-pair-matrix-form}
M_{x_sx_t, x_ux_v}(f)=\begin{bmatrix}
0 & 0& 0& 0\\
0 & \alpha & \beta & 0\\
0 & \gamma & \delta & 0\\
0 & 0& 0& 0\\
\end{bmatrix},
\end{equation}
%\end{center}
where $\{s,t,u,v\}$ is a permutation of $\{1, 2, 3, 4\}$.
 Then $\operatorname{Holant}(\neq_2\mid f)$
is \#P-hard unless
$f\in\mathscr{A}$ or
$f\in\mathscr{P}$,
  in which case the problem is computable in polynomial time.
\end{lemma}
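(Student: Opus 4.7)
The factorization
\begin{equation*}
f(x_s, x_t, x_u, x_v) = (\neq_2)(x_s, x_t)\cdot(\neq_2)(x_u, x_v)\cdot g(x_s, x_u),
\end{equation*}
where $g$ is the binary function with matrix $M = \left[\begin{smallmatrix}\alpha & \beta\\ \gamma & \delta\end{smallmatrix}\right]$, drives the proof and explains the name of the lemma. In any instance of $\holant{\neq_2}{f}$ the inner $(\neq_2)$ factors force $x_t = 1 - x_s$ and $x_v = 1 - x_u$ at every $f$-vertex, so each of the two pair-slots at a vertex carries a single binary spin. The external $\neq_2$ edges then propagate these spins from one pair-slot to another (possibly with a $\neq$-flip that can be absorbed by relabeling a spin), while the remaining weight contributed at each $f$-vertex is exactly $g$ applied to its two spins. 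Hence $\holant{\neq_2}{f}$ is morally a spin system whose edge interaction is $g$, which is why the case is classifiable by a reduction to $\CSP(\{g\})$.

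Making this precise should yield a polynomial-time equivalence between $\holant{\neq_2}{f}$ and $\CSP(\{g\})$. For tractability, if $M \in \mathscr{A}\cup\mathscr{P}$, I would argue that $f \in \mathscr{A}\cup\mathscr{P}$ as well: since the support of $\neq_2$ is a pair of complementary bit vectors, $\neq_2$ lies in both $\mathscr{A}$ and $\mathscr{P}$, and the decorations in the factorization are compatible with the algebraic conditions defining these classes. For hardness, if $M \notin \mathscr{A}\cup\mathscr{P}$ then Theorem~\ref{csp:dichotomy} gives that $\CSP(\{g\})$ is $\#$P-hard, and I would reduce from it to $\holant{\neq_2}{f}$. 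The construction: for each vertex of an input $\#$CSP instance of degree $d$, build a gadget of $f$-vertices wired by $\neq_2$-edges in which two $\neq_2$ in series play the role of $=_2$, so that a single spin is copied out to $d$ ``terminal'' pair-slots; each edge of the $\#$CSP instance is then realized by one $f$-vertex whose two pair-slots receive the respective terminals. Dualising the argument (i.e.\ simulating any Holant instance by a $\#$CSP instance) gives the other direction of the equivalence.

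The main obstacle is the correctness of this gadget construction under the rigid $4$-regularity demanded by the six-vertex model: one must verify that, after absorbing all the $\neq_2$-flips, the ``propagated spin'' interpretation is globally consistent and that the scalar factors contributed by the gadgets can be computed in polynomial time. A secondary but nontrivial step is to check the equivalence $f \in \mathscr{A}\cup\mathscr{P} \iff M \in \mathscr{A}\cup\mathscr{P}$ against the formal definitions of $\mathscr{A}$ and $\mathscr{P}$ from the full version~\cite{caifuguowilliam}, and to confirm that the polynomial-time algorithms for $\CSP$ on $\mathscr{A}$ and $\mathscr{P}$ pull back through the factorization to an algorithm for $\holant{\neq_2}{f}$.
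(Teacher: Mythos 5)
Your proposal follows essentially the same route as the paper: factor $f$ as $g(x_1,x_4)\cdot\chi_{x_1\neq x_2}\cdot\chi_{x_3\neq x_4}$, reduce $\operatorname{\#CSP}(g)\le_T\operatorname{Holant}(\neq_2\mid f)$ by replacing variables with cycles of pair-slots, and use the factorization to deduce that $f\notin\mathscr{A}\cup\mathscr{P}$ forces $g\notin\mathscr{A}\cup\mathscr{P}$, so that Theorem~\ref{csp:dichotomy} gives hardness of $\operatorname{\#CSP}(g)$ and hence of the Holant problem. Tractability for $f\in\mathscr{A}\cup\mathscr{P}$ is directly from Theorem~\ref{csp:dichotomy} (via $\operatorname{Holant}(\neq_2\mid f)\le\operatorname{\#CSP}(\{\neq_2,f\})$ and $\neq_2\in\mathscr{A}\cap\mathscr{P}$); you do not need the reverse reduction you sketch.

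Where your write-up is imprecise, and where a naive reading would actually fail, is the gadget. If the ``copy'' gadget for a degree-$d$ variable is built from $f$-vertices \emph{separate} from the $f$-vertices realizing the $\#$CSP edges, then every such extra $f$-vertex injects a $g$-factor evaluated at the propagated spin --- e.g.\ $g(\sigma(u),\sigma(u))$ --- which is \emph{not} a constant independent of the assignment unless $\alpha=\delta$; these are not absorbable scalar factors. The paper avoids this entirely by making each $f$-vertex serve double duty. It replaces each variable $u$ by a cycle of $\deg(u)$ degree-3 vertices labeled $(=_3)$, then contracts the two edges at each $g$-vertex, merging a cycle position of $u$ with a cycle position of $u'$ into a single degree-4 vertex labeled $f$. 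The result is 4-regular with exactly one $f$-vertex per $g$-constraint and no extras, and the propagation along a cycle uses a single $\neq_2$-node per cycle edge, with the $\chi_{x_1\neq x_2}$ (resp.\ $\chi_{x_3\neq x_4}$) factor of $f$ supplying the complementary flip. That is the sense in which ``two flips give equality''; it is not an extra pair of $\neq_2$-nodes in series. Once you see that the cycle structure and the $g$-factor are carried by the \emph{same} $f$-vertices, the 4-regularity and global-consistency worries you flag dissolve, and the two consistent orientations of each cycle biject with the two spin values of the corresponding variable.
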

\begin{proof}
By the $S_4$ group symmetry, we only
need to prove the lemma for $(s, t, u, v)=(1, 2, 4, 3)$.
Tractability follows from Theorem~\ref{csp:dichotomy}.

Let $g(x, y)$ be the binary signature
  $g = \left[\begin{smallmatrix}
 \alpha & \beta \\
 \gamma & \delta
\end{smallmatrix}\right]$
in matrix form.
%(not necessarily symmetric)
This means
$g_{00} =  \alpha, g_{01} =  \beta, g_{10} = \gamma$ and $g_{11} = \delta$.
%In matrix form $g(x, y)$ is given by
We prove that
$\operatorname{\#CSP}(g)\leq_{\operatorname{T}}\operatorname{Holant}(\neq_2\mid f)$
in two steps.
In each step, we begin with a signature grid and end with a new signature grid such that the Holant values of both signature grids are the same.

For step one, let $G=(U, V, E)$ be a bipartite graph
representing  an instance of \#CSP$(g)$, where each $u \in U$ is a variable,
and each $v \in V$ has degree two and is labeled $g$.
  We define a cyclic order of the edges incident to each vertex
$u \in U$, and decompose $u$ into $k = \deg(u)$ vertices.
 Then we connect the $k$
edges originally incident to $u$ to these $k$ new vertices so that each vertex is incident to exactly one edge. We also connect these $k$ new
 vertices in a cycle according to the cyclic order
(see Figure~\ref{holant-csp-b}).
Thus, in effect we have replaced $u$ by a cycle of length $k = \deg(u)$.
(If $k=1$ there is a self-loop.)
% by their incident edges.
    Each of  $k$ vertices has degree 3, and we assign them $(=_3)$.
Clearly this does not change the value of the partition function. The resulting graph has the following properties: (1) every vertex has either degree 2 or degree 3; (2) each degree 2 vertex is connected
to degree 3 vertices;
%%% i avoid talking about two degree 3 vertices because they could be the same.
(3) each degree 3 vertex is connected to exactly one degree 2 vertex.

Now step two. For every $v\in V$, $v$ has degree 2 and
is labeled by $g$. We contract the two edges incident to $v$. The resulting graph $G'=(V', E')$ is 4-regular.
We put a node on every edge of $G'$ and assign $(\neq_2)$ to the node
(see Figure~\ref{holant-csp-c}).
Next we assign a copy of $f$ to every $v'\in V'$ after this contraction.
The input variables $x_1, x_2, x_3, x_4$ are carefully
  assigned at each  copy of $f$ as illustrated in Figure~\ref{four-cases} so that there are exactly two configurations
 to each original cycle, which correspond to cyclic orientations,
  due to the $(\neq_2)$ on it and the
support set of $f$. These correspond to  the 0-1 assignment values at
the original variable $u \in U$.
Moreover in each case, the value of the function $g$ is
perfectly mirrored by the value of the function $f$
under the orientations.
So we have
$\operatorname{\#CSP}(g)\leq _{\operatorname{T}}
\operatorname{Holant}(\neq_2\mid f).$

%Next we determine what is the signature on $v'\in V'$ after this contraction. Clearly there are only two inputs to each original cycle due to the $(\neq_2)$ on it. Therefore its support is $\{0110, 0101, 1010, 1001\}$. Moreover, the weight on $0110$ is $\alpha$,  the weight on $0101$ is $\beta$, the weight on $1010$ is $\gamma$, and the weight on $1001$ is $\delta$. Hence it is exactly the signature $f$. So we have
%$\operatorname{\#CSP}(g)\leq _{\operatorname{T}}
%\operatorname{Holant}(\neq_2\mid f).$
%

We have $f(x_1,x_2,x_3,x_4)=g(x_1,x_4)\cdot \chi_{x_1 \neq x_2} \cdot \chi_{x_3 \neq x_4}$. Hence, $g \in \mathscr{A} \cup \mathscr{P}$ implies $f \in \mathscr{A} \cup \mathscr{P}$. Therefore if $f \not \in \mathscr{A} \cup \mathscr{P}$,
then $g \not \in \mathscr{A} \cup \mathscr{P}$. Then   $\operatorname{\#CSP}(g)$ is \#P-hard by Theorem~\ref{csp:dichotomy}.
It follows that $\operatorname{Pl-Holant}(\neq_2\mid  f)$ is \#P-hard.
This finishes the proof.
\end{proof}

\section{Case~\ref{all-nonzero-case}: All six values are nonzero}
In this section, we handle the case $axbycz \neq 0$, by proving all problems in this case are \#P-hard.  Firstly, we give a technical lemma for interpolation reduction.
Then we prove the 3-twins case. Finally, we prove the other cases by realizing a 3-twins problem.

\begin{lemma} \label{lem:interpolation}
Suppose $\alpha, \beta \in \mathbb{C} - \{0\}$, and
the lattice  $L = \{(j,k) \in \mathbb{Z}^{2} \mid  \alpha^j \beta^k=1\}$
has the form $L=\{(ns,nt) \mid n \in \mathbb{Z}\}$, where $s,t \in \mathbb{Z}$
 and $(s,t) \neq (0,0)$. Let  $\phi$ and $\psi$ be any numbers
satisfying $\phi^s \psi^t=1$.
If we are  given the values $N_\ell =
\sum_{j,k \ge 0,~j+k\leq m} (\alpha^j \beta^k)^{\ell} x_{j,k}$
for ${\ell}=1,2,  \ldots {m+2 \choose 2}$,
then we can compute $\sum_{j,k \ge 0,~j+k\leq m} \phi^j \psi^k x_{j,k}$
in polynomial time.
%If we are given the values $ Given a system of linear equations $\sum_{j,k \ge 0,~j+k\leq m} (\alpha^j \beta^k)^s x_{j,k}=N_s$, $s=1,2,  \ldots {m+2 \choose 2}$,
%where $N_s$ are known and $x_{j,k}, j \in \mathbb{N}, k \in \mathbb{N}, j+k\leq m$  are  unknowns,
%for any numbers $\phi$ and $\psi$ satisfying $\phi^s \psi^t=1$, we can compute $\sum_{j+k\leq m} \phi^j \psi^k x_{j,k}$ efficiently.
\end{lemma}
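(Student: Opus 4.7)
The plan is to exploit the lattice hypothesis to collapse the apparent $\binom{m+2}{2}$ unknowns $x_{j,k}$ into a smaller set of ``lumped'' unknowns on which the interpolation becomes a genuine nonsingular Vandermonde problem. The key observation is that $\alpha^s\beta^t = 1$ (take $n=1$ in the definition of $L$) and $\phi^s\psi^t = 1$ by hypothesis, so both maps $(j,k) \mapsto \alpha^j\beta^k$ and $(j,k) \mapsto \phi^j\psi^k$ from $\mathbb{Z}^2$ to $\mathbb{C}^{\times}$ are constant on each coset of $L$ in $\mathbb{Z}^2$. Write $T = \{(j,k) \in \mathbb{N}^2 : j + k \leq m\}$, let $C$ be the (finite) set of cosets of $L$ that meet $T$, and for each $c \in C$ pick any representative $(j_c, k_c) \in T \cap c$. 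Setting $y_c = \sum_{(j,k) \in T \cap c} x_{j,k}$ and $\lambda_c = \alpha^{j_c}\beta^{k_c}$, the hypotheses yield
\[ N_{\ell} \;=\; \sum_{c \in C} \lambda_c^{\ell}\, y_c, \qquad \sum_{(j,k) \in T} \phi^j\psi^k\, x_{j,k} \;=\; \sum_{c \in C} \phi^{j_c}\psi^{k_c}\, y_c, \]
so it suffices to recover the $y_c$ from the first $|C|$ of the given values $N_{\ell}$ and then output the right-hand side.

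The second step is to check that this smaller linear system is invertible. Since $\alpha, \beta \neq 0$, all $\lambda_c$ are nonzero. They are also pairwise distinct across $c \in C$: if $\alpha^{j_c}\beta^{k_c} = \alpha^{j_{c'}}\beta^{k_{c'}}$, then $(j_c - j_{c'}, k_c - k_{c'}) \in L$ by the very definition of $L$, forcing $c = c'$. Thus $L$ is exactly the kernel of the homomorphism $(j,k) \mapsto \alpha^j\beta^k$, and cosets are separated by $\lambda_c$. Taking $n = |C| \leq \binom{m+2}{2}$ of the supplied equations (say $\ell = 1, \ldots, n$), the coefficient matrix factors as $U \cdot \operatorname{diag}(\lambda_c)$, where $U_{\ell, c} = \lambda_c^{\ell-1}$ is a standard Vandermonde matrix on distinct nodes; both factors are invertible. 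Gaussian elimination therefore recovers the $y_c$ in time polynomial in $m$, after which evaluation of the target sum is trivial.

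The main conceptual obstacle, rather than a technical one, is recognizing why naive Vandermonde interpolation fails here and what to do about it: distinct pairs $(j,k) \in T$ may collide under $(j,k) \mapsto \alpha^j\beta^k$, and the lattice $L$ describes precisely which ones do. The hypothesis $\phi^s\psi^t = 1$ is exactly the condition needed to guarantee that the target sum also respects this coset identification; without it the lumped unknowns $y_c$ would not determine it, and no amount of data on the $N_{\ell}$ could recover what we need. Once the coset identification is in place, everything else is standard linear algebra and causes no further difficulty.
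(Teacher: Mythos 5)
Your proposal is correct and follows essentially the same approach as the paper: group the unknowns $x_{j,k}$ by cosets of $L$ into lumped variables (your $y_c$), observe that both $\alpha^j\beta^k$ and $\phi^j\psi^k$ are constant on each coset because $L$ lies in the kernel of both character maps, and then solve the now-nonsingular Vandermonde system. Your explicit factorization of the coefficient matrix as a Vandermonde on distinct nodes times a diagonal of nonzero $\lambda_c$ makes the invertibility argument a touch more self-contained than the paper's, but the underlying idea is identical.
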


\clearpage\thispagestyle{empty}\addtocounter{page}{-1}
\begin{figure}
\centering
\subfloat[]{
\begin{tikzpicture}[scale=0.4]
\node [internal, scale=0.6] at (1, 1.73) {};
\node at (1.8, 1.73) {$u'$};
\node [internal, scale=0.6] at (1, 5.73) {};
\node at (1.8, 5.73) {$u$};
\node [external] at (2, 0) {};
\node [external] at (-0.4, 4.33) {};
\node [external] at (2.4, 4.33) {};
\node [external] at (-0.4, 7.13) {};
\node [external] at (2.4, 7.13) {};
\node [external] at (0, 0) {};
\draw (0, 0)--(1, 1.73) [postaction={decorate, decoration={
                                        markings,
                                        mark=at position 0.63 with {\arrow[>=diamond,white] {>}; },
                                        mark=at position 0.63 with {\arrow[>=open diamond]  {>}; } } }];
\draw (2, 0)--(1, 1.73) [postaction={decorate, decoration={
                                        markings,
                                        mark=at position 0.63 with {\arrow[>=diamond,white] {>}; },
                                        mark=at position 0.63 with {\arrow[>=open diamond]  {>}; } } }];
\draw (1, 5.73)--(-0.4, 4.33) [postaction={decorate, decoration={
                                        markings,
                                        mark=at position 0.63 with {\arrow[>=diamond,white] {>}; },
                                        mark=at position 0.63 with {\arrow[>=open diamond]  {>}; } } }];
\draw (1, 5.73)--(-0.4, 7.13) [postaction={decorate, decoration={
                                        markings,
                                        mark=at position 0.63 with {\arrow[>=diamond,white] {>}; },
                                        mark=at position 0.63 with {\arrow[>=open diamond]  {>}; } } }];
\draw (1, 5.73)--(2.4, 4.33) [postaction={decorate, decoration={
                                        markings,
                                        mark=at position 0.63 with {\arrow[>=diamond,white] {>}; },
                                        mark=at position 0.63 with {\arrow[>=open diamond]  {>}; } } }];
\draw (1, 5.73)--(2.4, 7.13) [postaction={decorate, decoration={
                                        markings,
                                        mark=at position 0.63 with {\arrow[>=diamond,white] {>}; },
                                        mark=at position 0.63 with {\arrow[>=open diamond]  {>}; } } }];
\draw (1, 5.73)--(1, 1.73) [postaction={decorate, decoration={
                                        markings,
                                        mark=at position 0.63 with {\arrow[>=diamond,white] {>}; },
                                        mark=at position 0.63 with {\arrow[>=open diamond]  {>}; } } }];
\end{tikzpicture}
}
\qquad
\subfloat[]{
\begin{tikzpicture}[scale=0.4]
%%%%%%%%%%%%%%%%%%%%%%%%%%%%%
\draw [thick, ->](2, 3.73)--(4, 3.73);
\draw [thick, ->](15.5, 3.73)--(17.5, 3.73);
%%%%%%%%%%%%%%%%%%%%%%%%%%%%%the second graph
\node [internal, scale=0.6] (51) at (8.2, 7.56) {};
\node [internal, scale=0.6] (52) at (10.8, 7.56) {};
\node [internal, scale=0.6]  (53) at (9.5, 4.23) {};
\node [internal, scale=0.6] (54) at (7.5, 5.56) {};
\node [internal, scale=0.6] (55) at (11.5, 5.56) {};
\node at (9.5, 6) {$u$};
\node at (9.4, 1.2) {$u'$};
\node [external] (51a) at (6.8, 8.96) {};
\node [external] (52a) at (12.4, 8.96) {};
\node [external] (54a) at (6.1, 4.16) {};
\node [external] (55a) at (12.9, 4.16) {};
\draw (51)--(52);
\draw (51)--(54);
\draw (55)--(52);
\draw (53)--(54);
\draw (53)--(55);
\draw (51)--(51a) [postaction={decorate, decoration={
                                        markings,
                                        mark=at position 0.83 with {\arrow[>=diamond,white] {>}; },
                                        mark=at position 0.83 with {\arrow[>=open diamond]  {>}; } } }];
\draw (52)--(52a) [postaction={decorate, decoration={
                                        markings,
                                        mark=at position 0.83 with {\arrow[>=diamond,white] {>}; },
                                        mark=at position 0.83 with {\arrow[>=open diamond]  {>}; } } }];
\draw (54)--(54a) [postaction={decorate, decoration={
                                        markings,
                                        mark=at position 0.83 with {\arrow[>=diamond,white] {>}; },
                                        mark=at position 0.83 with {\arrow[>=open diamond]  {>}; } } }];
\draw (55)--(55a) [postaction={decorate, decoration={
                                        markings,
                                        mark=at position 0.83 with {\arrow[>=diamond,white] {>}; },
                                        mark=at position 0.83 with {\arrow[>=open diamond]  {>}; } } }];
%%%%%%%%%%%%%%%%triangle
\node [internal, scale=0.6]  (31) at (9.5, 2.23) {};
\node [internal, scale=0.6]  (32) at (8.5, 0.53) {};
\node [internal, scale=0.6]  (33) at (10.5, 0.53) {};
\node [external]  (32a) at (7.1, -0.87) {};
\node [external]  (33a) at (11.9, -0.87) {};
\draw (31)--(32);
\draw (31)--(33);
\draw (32)--(33);
\draw (32)--(32a) [postaction={decorate, decoration={
                                        markings,
                                        mark=at position 0.83 with {\arrow[>=diamond,white] {>}; },
                                        mark=at position 0.83 with {\arrow[>=open diamond]  {>}; } } }];
\draw (33)--(33a) [postaction={decorate, decoration={
                                        markings,
                                        mark=at position 0.83 with {\arrow[>=diamond,white] {>}; },
                                        mark=at position 0.83 with {\arrow[>=open diamond]  {>}; } } }];
\draw (53)--(31) [postaction={decorate, decoration={
                                        markings,
                                        mark=at position 0.83 with {\arrow[>=diamond,white] {>}; },
                                        mark=at position 0.83 with {\arrow[>=open diamond]  {>}; } } }];
%%%%%%%%%%%%%%%%%%%%%%%%%%%%%%%%%%%%%%%%%%%%
%%%%%%%%%%%%%%%%%%%%%%%%%%%%%%%%%%%%%%%%%%%%
%%%%%%%%%%%%%%%%%%%%%%%%%%%%%
%
\end{tikzpicture}
\label{holant-csp-b}
}
\qquad
%%%%%%%%%%%
%%%%%%%%%%%%%%%%%%%%%%%%%%%%%the second graph
\subfloat[]{
\begin{tikzpicture}[scale=0.4]
\node [triangle, scale=0.4] (c51) at (18.2, 7.56) {};
\node [external] (c51a) at (18.4, 9.56) {};
\node [external] (c51b) at (17, 9.2) {};
\draw [densely dashed] (c51) to [bend left=20] (c51a);
\draw [densely dashed] (c51) to [bend right=20] (c51b);
%%%%%%%%%%%%%%%%%%%%%%%%%%%%%%%%%%%%%%%%%%%%%%%%%%%%%%%%%%
\node [triangle, scale=0.4] (c52) at (20.8, 7.56) {};
\node [external] (c52a) at (20.6, 9.56) {};
\node [external] (c52b) at (22, 9.2) {};
\draw [densely dashed] (c52) to [bend right=20] (c52a);
\draw [densely dashed] (c52) to [bend left=20] (c52b);
%%%%%%%%%%%%%%%%%%%%%%%%%%%%%%%%%%%%%%%%%
\node [triangle, scale=0.4]  (c53) at (19.5, 4.23) {};
\node [triangle, scale=0.4] (c54) at (17.5, 5.56) {};
\node [external] (c54a) at (15.5, 5.56) {};
\node [external] (c54b) at (16, 4.16) {};
\draw [densely dashed] (c54) to [bend left=20] (c54a);
\draw [densely dashed] (c54) to [bend right=20] (c54b);
\node [triangle, scale=0.4] (c55) at (21.5, 5.56) {};
\node [external] (c55a) at (23.5, 5.56) {};
\node [external] (c55b) at (23, 4.16) {};
\draw [densely dashed] (c55) to [bend right=20] (c55a);
\draw [densely dashed] (c55) to [bend left=20] (c55b);
\node [external] (c51a) at (16.8, 8.96) {};
\node [external] (c52a) at (22.4, 8.96) {};
\node [external] (c54a) at (16.1, 4.16) {};
\node [external] (c55a) at (22.9, 4.16) {};
\draw (c51) to [bend right=40](c52)[postaction={decorate, decoration={
                                        markings,
                                        mark=at position 0.5 with {\arrow[>=square,white, scale=0.7] {>}; },
                                        mark=at position 0.5 with {\arrow[>=open square, scale=0.7]  {>}; } } }];
\draw (c51) to [bend left=40](c54) [postaction={decorate, decoration={
                                        markings,
                                        mark=at position 0.5 with {\arrow[>=square,white, scale=0.7] {>}; },
                                        mark=at position 0.5 with {\arrow[>=open square, scale=0.7]  {>}; } } }];
\draw (c55) to [bend left=40](c52) [postaction={decorate, decoration={
                                        markings,
                                        mark=at position 0.5 with {\arrow[>=square,white, scale=0.7] {>}; },
                                        mark=at position 0.5 with {\arrow[>=open square, scale=0.7]  {>}; } } }];
\draw (c53) to [bend right=40](c54)[postaction={decorate, decoration={
                                        markings,
                                        mark=at position 0.5 with {\arrow[>=square,white, scale=0.7] {>}; },
                                        mark=at position 0.5 with {\arrow[>=open square, scale=0.7]  {>}; } } }];
\draw (c53) to [bend left=40](c55)[postaction={decorate, decoration={
                                        markings,
                                        mark=at position 0.5 with {\arrow[>=square,white, scale=0.7] {>}; },
                                        mark=at position 0.5 with {\arrow[>=open square, scale=0.7]  {>}; } } }];
%%%%%%%%%%%%%%%%triangle
%\node [internal, scale=0.6]  (31) at (9.5, 2.23) {};
\node [triangle, scale=0.4]  (c32) at (18, 2.53) {};
\node [external]  (c32a) at (16, 2.53) {};
\node [external]  (c32b) at (16.5, 1.13) {};
\draw [densely dashed] (c32) to [bend left=20](c32a);
\draw [densely dashed] (c32) to [bend right=20](c32b);
\node [triangle, scale=0.4]  (c33) at (21, 2.53) {};
\node [external]  (c33a) at (23, 2.53) {};
\node [external]  (c33b) at (22.5, 1.13) {};
\draw [densely dashed] (c33) to [bend right=20](c33a);
\draw [densely dashed] (c33) to [bend left=20](c33b);
\draw (c53) to [bend left=20](c32) [postaction={decorate, decoration={
                                        markings,
                                        mark=at position 0.5 with {\arrow[>=square,white, scale=0.7] {>}; },
                                        mark=at position 0.5 with {\arrow[>=open square, scale=0.7]  {>}; } } }];
\draw (c32) to [bend left=20](c33) [postaction={decorate, decoration={
                                        markings,
                                        mark=at position 0.5 with {\arrow[>=square,white, scale=0.7] {>}; },
                                        mark=at position 0.5 with {\arrow[>=open square, scale=0.7]  {>}; } } }];
                                        \draw (c53) to [bend right=20](c33) [postaction={decorate, decoration={
                                        markings,
                                        mark=at position 0.5 with {\arrow[>=square,white, scale=0.7] {>}; },
                                        mark=at position 0.5 with {\arrow[>=open square, scale=0.7]  {>}; } } }];
\end{tikzpicture}
\label{holant-csp-c}
}
\caption{\scriptsize{The reduction from \#CSP$(g)$ to Holant$(\neq_2|f)$.
 The circle vertices are assigned $=_d$, where $d$ is the degree of the corresponding vertex,
 the diamond vertices are assigned $g$,
 the triangle vertices are assigned $f$,
 and the square vertices are assigned $\neq_2$.
 In the first step, we replace a vertex by a cycle, where the length of the cycle is the degree of the vertex.
 The vertices on the cycle are assigned $=_3$.
 In the second step, we merge two vertices that are connected to the diamond with $g$
and assign $f$ to the new vertex.}}
 \label{holant-csp}
\end{figure}
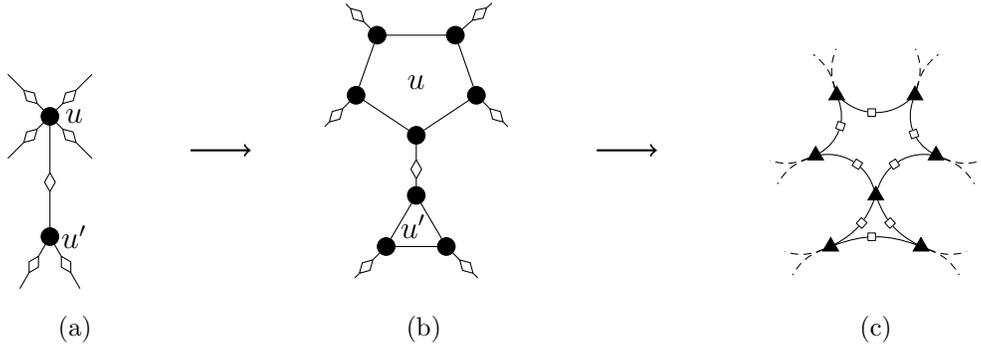

\vspace{-5in}
\input{4cases}
\vspace{-5in}
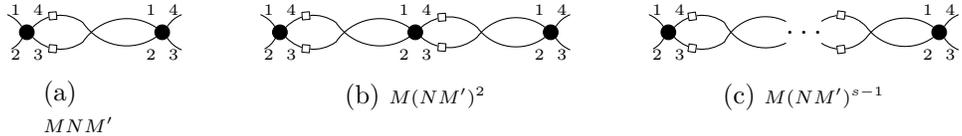
\begin{figure}
\centering
\subfloat[${\scriptstyle MNM'}$]{
\begin{tikzpicture}[scale=0.3]
\node [internal, scale=0.5] (0) at (0, 2) {};
\node at (-0.5, 3) {\tiny{1}};
\node at (-0.5, 1) {\tiny{2}};
\node at (0.5, 3) {\tiny{4}};
\node at (0.5, 1) {\tiny{3}};
\node [external] (1) at (3, 2) {};
\node [external] (0a) at (-1.5, 3) {};
\node [external] (0b) at (-1.3, 1) {};
\node [internal, scale=0.5] (2) at (6, 2) {};
\node at (5.5, 3) {\tiny{1}};
\node at (5.5, 1) {\tiny{2}};
\node at (6.5, 3) {\tiny{4}};
\node at (6.5, 1) {\tiny{3}};
\node [external] (2a) at (7.5, 3) {};
\node [external] (2b) at (7.5, 1) {};
\draw (0) to [bend left=40] (1) to [bend right=45] (2)[postaction={decorate, decoration={
                                        markings,
                                        mark=at position 0.2 with {\arrow[>=square,white, scale=0.7] {>}; },
                                        mark=at position 0.2 with {\arrow[>=open square, scale=0.7]  {>}; } } }];
\draw (0) to [bend right=40] (1) to [bend left=45] (2)[postaction={decorate, decoration={
                                        markings,
                                        mark=at position 0.2 with {\arrow[>=square,white, scale=0.7] {>}; },
                                        mark=at position 0.2 with {\arrow[>=open square, scale=0.7]  {>}; } } }];
\draw (0) to [bend right=15] (0a);
\draw (0) to [bend left=15] (0b);
\draw (2) to [bend right=15] (2b);
\draw (2) to [bend left=15] (2a);
\end{tikzpicture}
}
\subfloat[${\scriptstyle M(NM')^2}$]{
\begin{tikzpicture}[scale=0.3]
\node [internal, scale=0.5] (0) at (0, 2) {};
\node at (-0.5, 3) {\tiny{1}};
\node at (-0.5, 1) {\tiny{2}};
\node at (0.5, 3) {\tiny{4}};
\node at (0.5, 1) {\tiny{3}};
\node [external] (1) at (3, 2) {};
\node [external] (0a) at (-1.5, 3) {};
\node [external] (0b) at (-1.3, 1) {};
\node [internal, scale=0.5] (2) at (6, 2) {};
\node at (5.5, 3) {\tiny{1}};
\node at (5.5, 1) {\tiny{2}};
\node at (6.5, 3) {\tiny{4}};
\node at (6.5, 1) {\tiny{3}};
\node [external, scale=0.5] (3) at (9, 2) {};
\node [internal, scale=0.5] (4) at (12, 2) {};
\node at (11.5, 3) {\tiny{1}};
\node at (11.5, 1) {\tiny{2}};
\node at (12.5, 3) {\tiny{4}};
\node at (12.5, 1) {\tiny{3}};
\node [external] (4a) at (13.5, 3) {};
\node [external] (4b) at (13.5, 1) {};
\draw (0) to [bend left=40] (1) to [bend right=45] (2)[postaction={decorate, decoration={
                                        markings,
                                        mark=at position 0.2 with {\arrow[>=square,white, scale=0.7] {>}; },
                                        mark=at position 0.2 with {\arrow[>=open square, scale=0.7]  {>}; } } }];
\draw (0) to [bend right=40] (1) to [bend left=45] (2)[postaction={decorate, decoration={
                                        markings,
                                        mark=at position 0.2 with {\arrow[>=square,white, scale=0.7] {>}; },
                                        mark=at position 0.2 with {\arrow[>=open square, scale=0.7]  {>}; } } }];
\draw (2) to [bend left=40] (3) to [bend right=45] (4)[postaction={decorate, decoration={
                                        markings,
                                        mark=at position 0.2 with {\arrow[>=square,white, scale=0.7] {>}; },
                                        mark=at position 0.2 with {\arrow[>=open square, scale=0.7]  {>}; } } }];
\draw (2) to [bend right=40] (3) to [bend left=45] (4)[postaction={decorate, decoration={
                                        markings,
                                        mark=at position 0.2 with {\arrow[>=square,white, scale=0.7] {>}; },
                                        mark=at position 0.2 with {\arrow[>=open square, scale=0.7]  {>}; } } }];
\draw (0) to [bend right=15] (0a);
\draw (0) to [bend left=15] (0b);
\draw (4) to [bend right=15] (4b);
\draw (4) to [bend left=15] (4a);
\end{tikzpicture}
}
\subfloat[${\scriptstyle M(NM')^{s-1}}$]{
\begin{tikzpicture}[scale=0.3]
\node [internal, scale=0.5] (0) at (0, 2) {};
\node at (-0.5, 3) {\tiny{1}};
\node at (-0.5, 1) {\tiny{2}};
\node at (0.5, 3) {\tiny{4}};
\node at (0.5, 1) {\tiny{3}};
\node [external] (1) at (3, 2) {};
\node [external] (0a) at (-1.5, 3) {};
\node [external] (0b) at (-1.3, 1) {};
\node [external, scale=0.5] (2) at (6, 2) {};
\node [external, scale=0.5] (2x) at (5.5, 2.3) {};%%%%
\node [external, scale=0.5] (2y) at (5.5, 1.7) {};%%%%
\node [external, scale=0.5] (2z) at (6.5, 2.3) {};%%%%
\node [external, scale=0.5] (2w) at (6.5, 1.7) {};%%%%
\node  at (6, 2) {$\ldots$};
\node [external, scale=0.5] (3) at (9, 2) {};
\node [internal, scale=0.5] (4) at (12, 2) {};
\node at (11.5, 3) {\tiny{1}};
\node at (11.5, 1) {\tiny{2}};
\node at (12.5, 3) {\tiny{4}};
\node at (12.5, 1) {\tiny{3}};
\node [external] (4a) at (13.5, 3) {};
\node [external] (4b) at (13.5, 1) {};
\draw (0) to [bend left=40] (1) to [bend right=45] (2y)[postaction={decorate, decoration={
                                        markings,
                                        mark=at position 0.2 with {\arrow[>=square,white, scale=0.7] {>}; },
                                        mark=at position 0.2 with {\arrow[>=open square, scale=0.7]  {>}; } } }];
\draw (0) to [bend right=40] (1) to [bend left=45] (2x)[postaction={decorate, decoration={
                                        markings,
                                        mark=at position 0.2 with {\arrow[>=square,white, scale=0.7] {>}; },
                                        mark=at position 0.2 with {\arrow[>=open square, scale=0.7]  {>}; } } }];
\draw (2z) to [bend left=40] (3) to [bend right=45] (4)[postaction={decorate, decoration={
                                        markings,
                                        mark=at position 0.2 with {\arrow[>=square,white, scale=0.7] {>}; },
                                        mark=at position 0.2 with {\arrow[>=open square, scale=0.7]  {>}; } } }];
\draw (2w) to [bend right=40] (3) to [bend left=45] (4)[postaction={decorate, decoration={
                                        markings,
                                        mark=at position 0.2 with {\arrow[>=square,white, scale=0.7] {>}; },
                                        mark=at position 0.2 with {\arrow[>=open square, scale=0.7]  {>}; } } }];
\draw (0) to [bend right=15] (0a);
\draw (0) to [bend left=15] (0b);
\draw (4) to [bend right=15] (4b);
\draw (4) to [bend left=15] (4a);
\end{tikzpicture}
}
\caption{\scriptsize{Recursive construction of the interpolation in Lemma~\ref{3twin}.
The circles are assigned $f$ and the squares are assigned $\neq_2$.}}
\label{interpolation}
\end{figure}

\clearpage

\begin{proof}
We treat
$\sum_{j,k \ge 0,~j+k\leq m} (\alpha^j \beta^k)^{\ell} x_{j,k} = N_{\ell}$
(where $1 \le {\ell} \le {m+2 \choose 2}$)
as a system of linear equations with unknowns $x_{j,k}$.
The coefficient vector of the first equation is $(\alpha^j \beta^k)$,
indexed by the pair $(j,k)$, where $0 \le j, k \le m$ and $j+k\leq m$.
 The coefficient matrix of the linear system is a Vandermonde matrix,
with row index ${\ell}$ and column index $(j,k)$.
However, this Vandermonde matrix is rank deficient.
If $(j,k)-(j',k') \in L$, then columns $(j,k)$ and $(j',k')$ have the same value.

We can combine the identical columns $(j,k)$ and $(j',k')$
if $(j,k)-(j',k') \in L$, since for each coset $T$ of $L$,
the value $\alpha^j \beta^k$ is constant.  Thus, the sum
$\sum_{j,k \ge 0,~j+k\leq m} (\alpha^j \beta^k)^{\ell} x_{j,k}$ can be
written as
%$\sum_{{\rm cosets of} T {\rm of} L}  (\alpha^j \beta^k)^s
$\sum_{T} (\alpha^j \beta^k)^{\ell}
 \left(
\sum_{j,k \ge 0,~j+k\leq m,~
(j,k) \in T}   x_{j,k} \right)$,
where the sum over $T$ is for all cosets $T$ of $L$
having a non-empty intersection
with the cone $C = \{ (j,k) \mid 0 \le j, k \le m, ~j+k\leq m\}$.
Now  the coefficient matrix, indexed by $1 \le {\ell} \le {m+2 \choose 2}$
for the  rows  and the cosets $T$ with $T \cap C \not = \emptyset$
for the columns,
% non-empty intersection
%with the cone $C$
has full rank.
And so we can solve $\left( \sum_{j,k \ge 0,~j+k\leq m,~
(j,k) \in T}   x_{j,k} \right)$ for each coset $T$ with
$T \cap C \not = \emptyset$.
% non-empty intersection
%with $C$.
 Notice that for the sum
 $\sum_{j+k\leq m} \phi^j \psi^k x_{j,k}$, we also have the
expression
 $\sum_{T}  \phi^j \psi^k
 \left(
\sum_{j,k \ge 0,~j+k\leq m,~
(j,k) \in T}   x_{j,k} \right)$, since $\phi^j \psi^k$
on each coset $T$ of $L$ is also constant.
The lemma follows.
\end{proof}

Now we prove the \#P-hardness for the 3-twins case.
In this case $a=x$, $b=y$ and $c=z$. We denote by $M(a,b,c)$
the problem defined by the
 signature matrix $M_{x_1x_2, x_4x_3} (a,a,b,b,c,c)$.

\begin{lemma}\label{3twin}
Let $f$ be a 4-ary signature with the signature matrix
$M_{x_1x_2, x_4x_3}(f)=
\left [ \begin{smallmatrix}
%\begin{bmatrix}
0 & 0 & 0 & a\\
0 & b & c & 0\\
0 & c & b & 0\\
a & 0 & 0 & 0\\
\end{smallmatrix} \right ]$ with $abc\neq 0$.
%\end{bmatrix}$ with $abc\neq 0$.
Then $\operatorname{Holant}(\neq_2|f)$ is \#P-hard.
\end{lemma}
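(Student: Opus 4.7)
My strategy is to use the chain gadget of Figure~\ref{interpolation} to reduce from a redundant 3-twin target signature $\tilde f$ satisfying the nonzero $3 \times 3$ determinant hypothesis of Theorem~\ref{redundant}; the \#P-hardness of $\holant{\neq_2}{\tilde f}$ then transfers to $\holant{\neq_2}{f}$ via interpolation.

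\textbf{Gadget and expansion.} Chaining $s$ copies of $f$ through the double disequality $N$ produces the gadget with signature matrix $M_s = M(NM)^{s-1}$. The middle $2 \times 2$ block iterates under $\left[\begin{smallmatrix}c & b \\ b & c\end{smallmatrix}\right]$, and direct computation shows $M_s$ remains a 3-twin of the form $M(a^s, u_s, v_s)$ with $u_s + v_s = (b+c)^s$ and $u_s - v_s = (-1)^{s+1}(b-c)^s$. In the 3-twin eigenbasis $\{E_0, E_1, E_2\}$ (with $E_1$ the redundant and $E_2$ the antiredundant middle-block component), $M_s = a^s E_0 + \tfrac{(b+c)^s}{2} E_1 + \tfrac{(-1)^{s+1}(b-c)^s}{2} E_2$. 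For a signature grid $\Omega$ with $n$ target vertices, substituting $M_s$ at each one, expanding $u_s^{j_1} v_s^{j_2}$ binomially in $(b+c)^s$ and $(b-c)^s$, restricting to odd $s$ to suppress the alternating sign, and scaling by $a^{-ns}$ yields
\[
\Holant(\Omega; M_s)/a^{ns} = \sum_{J + R \le n}(\alpha^J \beta^R)^s D_{J,R},
\]
where $\alpha = (b+c)/a$, $\beta = (b-c)/a$, and the $D_{J,R}$ are independent of $s$. This matches Lemma~\ref{lem:interpolation} with $m = n$.

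\textbf{Completing via Theorem~\ref{redundant}.} When the lattice $L = \{(j,k) \in \mathbb{Z}^2 : \alpha^j \beta^k = 1\}$ is trivial, a full Vandermonde inversion on polynomially many values $N_s$ recovers each $D_{J,R}$, whence I can assemble $\Holant(\Omega; \tilde f)$ for the redundant target $\tilde f = M(\tilde a, \tilde b, \tilde b)$ with $\tilde a \tilde b \neq 0$; the $3 \times 3$ determinant of $\tilde f$ equals $-\tilde a^2 \tilde b \neq 0$, so Theorem~\ref{redundant} yields \#P-hardness. When $L$ has rank one with primitive generator $(s_0, t_0)$, Lemma~\ref{lem:interpolation} still delivers $\sum \phi^J \psi^R D_{J,R}$ for any $(\phi, \psi)$ on the coset $\phi^{s_0} \psi^{t_0} = 1$, and I pick $(\phi, \psi)$ so that the associated 3-twin target has nonvanishing $3 \times 3$ determinant.

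\textbf{Main obstacle.} The sharpest difficulty is that landing on an exactly redundant target corresponds to $\psi = 0$, which Lemma~\ref{lem:interpolation} admits only when $t_0 = 0$, i.e., when $\alpha$ is a root of unity while $\beta$ is not. In all other rank-one regimes, as well as in the degenerate boundaries $b = \pm c$ where $\alpha$ or $\beta$ vanishes and the lemma is not directly applicable, I would need additional ingredients: a careful choice of $(\phi, \psi)$ on the given coset landing on a non-redundant but still Theorem~\ref{redundant}-eligible signature, a modified chain gadget (for instance alternating $N$ with single $\neq_2$ connectors) that shifts which pair plays the ``isolated'' role, or the $S_3$ permutation symmetry of the three complementary pairs to relabel $(a,b,c)$. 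The subcase $b = c$ is immediate because $f$ itself is then redundant with determinant $-a^2 b \neq 0$. Arranging uniformly over all parameter regimes that the final target falls under Theorem~\ref{redundant} is the technical core of the argument.
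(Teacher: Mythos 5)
Your setup is sound and largely parallels the paper: the chain gadget, the conjugation by $P$ to diagonalize the middle $2\times 2$ block, the stratification of the Holant sum by how many copies of $\Lambda_s$ see each nonzero support pattern, and the appeal to Lemma~\ref{lem:interpolation}. (Your use of $M(NM)^{s-1}$ rather than the paper's $M(NM')^{s-1}$ with the row-swapped $M'$ introduces the $(-1)^{s+1}$ sign, but restricting to odd $s$ handles it; the paper's version simply avoids it.) Your observation that $b=c$ makes $f$ itself redundant with $3\times 3$ determinant $-a^2 b \neq 0$ is correct and in fact shortcuts the paper's interpolation to $M(1,\tfrac12,\tfrac12)$ in that subcase.

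However, the proposal stops at exactly the point where the real work begins, and what remains is not a detail but the crux. You explicitly punt on $b=-c$ and on the rank-one lattice regimes where $t_0\neq 0$, offering only a menu of possible ideas (relabel by $S_3$, tweak the gadget, choose a different $(\phi,\psi)$) without verifying that any of them actually yields a signature certified \#P-hard. The paper's proof supplies two ingredients you are missing. First, it exploits the $S_3$ symmetry \emph{before} setting up the lattice: WLOG the pair $b,c$ chosen as the ``middle block'' is not orthogonal as vectors in $\CC$, which forces $|\alpha|\neq|\beta|$ for $\alpha=(b+c)/a$, $\beta=(b-c)/a$; this kills the rank-two lattice case outright and guarantees $s+t\neq 0$ in the rank-one case. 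Second, in the rank-one case it does a concrete four-way split on the signs and magnitudes of $(s,t)$ and in each branch exhibits an explicit $(\phi,\psi)$ with $\phi^s\psi^t=1$ that produces a new 3-twin signature reducible to an already-solved subcase (two entries equal, or two entries opposite); and the $b=-c$ subcase itself requires an extra iteration — realize $M(2,1,-1)$, permute to $M(-1,2,1)$, square via $N$ to $M(1,5,4)$, re-diagonalize and interpolate the square roots $3^{si}$ of the eigenvalues $9^{si}$ to reach $M(1,2,1)\sim M(2,1,1)$, which Theorem~\ref{redundant} covers. Without these closing steps, your argument establishes the interpolation machinery but does not actually conclude \#P-hardness for the full range of $(a,b,c)$ with $abc\neq 0$.
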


\begin{proof}
%Let $N=\left [ \begin{smallmatrix}
%0 & 1 \\
%1 & 0
%\end{smallmatrix} \right ]^{\otimes 2}$.
We construct a series of
   gadgets by a chain of one leading copy of $f$ and
a sequence of twisted copies of $f$ linked by two $(\neq_2)$'s
in between.
It has the signature matrix $D_s= M (N M')^{s-1}$,
for $s \ge 1$,
where  $M =M_{x_1x_2,x_4x_3}(f)$,
$M' = M_{x_2x_1,x_4x_3}(f)$ is a permuted copy of $M$,
and $N$ is the double {\sc Disequality}.
See Figure~\ref{interpolation}.
This is in the right side of
 $\operatorname{Holant}(\neq_2|f)$.

The signature matrix  of this gadget is
 given as a product of matrices. Each matrix is a function of arity $4$.
Notice that the two row indices in $M_{x_2x_1,x_4x_3}(f)$ exchange their positions compared with the standard one $M_{x_1x_2,x_4x_3}(f)$.
%Thus $M'$ is obtained from $M$ by  exchanging the two middle rows.
%To construct the gadget $D_s$,
%%we connect by a double negation $N$,
 %we  exchange the first two edges
%of $M$, making it $M'$, and connect  them via a double negation $N$
%%and  by a reverse order
%to the two trailing edges of the previous iteration of the gadget $D_{s-1}$.
%%the last  two edges of  and connect a standard $M$.
%%%We get a matrix $D_s$ with index $(j_1 j_2, j_3 j_4)$, which is also a function of arity $4$ in inputs $j_1, j_2, j_3, j_4$.
%%%\footnote{Assume $M$ is planar in the order $i_1,i_2,i_3,i_4$. We may undo the twist along this fried dough twist. $D_s$ is in fact planar in the order $j_1,j_2,j_3,j_4$, when $s$ is odd. When $s$ is even, the order is $j_1,j_2,j_4,j_3$, or equivalently $j_2,j_1,j_3,j_4$.}
%
%
%
Thus the rows of $M$ under go the permutation
$(00, 01, 10, 11) \rightarrow (00, 10, 01, 11)$ to get $M'$.
In other words, $M'$
is obtained from $M$ by exchanging the middle two rows.
Also $NM'$ reverses all 4 rows of $M'$. So we have
\[N M'=
\left [ \begin{array}{cccc}
a & 0 & 0 & 0 \\
0 & b & c & 0 \\
0 & c & b & 0 \\
0 & 0 & 0 & a
\end{array} \right ], ~~~~\mbox{and}~~~~ D_s=
\left [ \begin{array}{cccc}
0 & \bf{0} & a^s \\
\mathbf{0} & {\left [ \begin{array}{cc}  b & c \\ c & b \end{array} \right ]^s} & \mathbf{0} \\
a^s & \bf{0} & 0
\end{array} \right ].\]
%where $T=\left [ \begin{array}{cc}  b & c \\ c & b \end{array} \right ]^s=H \left [ \begin{array}{cc}  (b+c)^s & 0 \\ 0 & (b-c)^s \end{array} \right ]  H$.

We diagonalize the 2 by 2 matrix in the middle using
$H=\frac{1}{\sqrt{2}}
\left[\begin{smallmatrix}
1 & 1 \\
 1 & -1
\end{smallmatrix}\right]
$ (note that $H^{-1} = H$),
 and get $D_s=P \Lambda_s P$, where
\[P=\left [ \begin{array}{ccc}
1 & \bf{0} & 0 \\
\mathbf{0} & H & \mathbf{0} \\
0 & \bf{0} & 1
\end{array} \right ], ~~~~\mbox{and}~~~~
\Lambda_s=
\left [ \begin{array}{cccc}
0 & 0 & 0 & a^s \\
0 & (b+c)^s & 0 & 0 \\
0 & 0 & (b-c)^s & 0 \\
a^s & 0 & 0 & 0
\end{array} \right ].\]

The matrix $\Lambda_s$ has a good form for polynomial interpolation. Suppose we have a problem $\operatorname{Holant}(\neq_2\mid F)$ to be reduced to $\operatorname{Holant}(\neq_2\mid M)$. Let $F$ appear $m$ times in an instance $\Omega$. We replace each appearance of $F$ by a copy of
the gadget $D_s$, to get an instance $\Omega_s$ of $\operatorname{Holant}(\neq_2\mid M)$.
We can treat each of the $m$ appearances of $D_s$ as a new gadget composed of three functions in sequence
$P$, $\Lambda_s$ and $P$, and denote this new instance by $\Omega'_s$. We divide  $\Omega'_s$ into two parts. One part is composed of $m$ functions $\Lambda_s$. The second part is the rest of the functions, including $2m$ occurrences
of $P$, and its signature is represented by $X$ (which is a tensor expressed
as a row vector).
 The Holant value of  $\Omega'_s$ is the dot product $\langle X, \Lambda_s^{\otimes m}  \rangle$, which is a summation over $4m$ bits, that is, the values of the $4m$ edges connecting the two parts. We can  stratify
all 0-1 assignments of these  $4m$ bits having a nonzero
evaluation of $\operatorname{Holant}_{\Omega'_s}$ into the following categories:
% each of which corresponds to an ordered input array of $\Lambda_s$.
\begin{itemize}
\item
There are $i$ many copies of $\Lambda_s$ receiving inputs $0011$ or $1100$;
\item
There are $j$ many copies of $\Lambda_s$ receiving inputs $0110$; and
\item
There are $k$ many copies of $\Lambda_s$ receiving inputs $1001$
\end{itemize}
such that $i+j+k = m$.

%$\Lambda_s^{\otimes m}$ is a symmetric function and it has at most ${m+2 \choose 2}$ distinct nonzero values from $\{a^{si} (b+c)^{sj}  (b-c)^{sk}| i+j+k=m\}$.
For any assignment in the category with parameter $(i,j,k)$, the evaluation of
$\Lambda_s^{\otimes m}$ is clearly $a^{si} (b+c)^{sj}  (b-c)^{sk}$.
 We can rewrite the dot product summation and  get
\begin{equation}  \label{eqa: interpo}
 \operatorname{Holant}_{\Omega_s} =
 \operatorname{Holant}_{\Omega'_s} =
%{{(G_s)=\#(G'_s)=\langle X, \Lambda_s^{\otimes m}  \rangle
\langle X, \Lambda_s^{\otimes m}  \rangle
=\sum_{i+j+k=m}  a^{si} (b+c)^{sj}  (b-c)^{sk}  x_{i,j,k},
\end{equation}
where $x_{i,j,k}$ is the summation of values of
the second part $X$ over all assignments in the category $(i,j,k)$.
% to its $4m$ input bits such that among all $m$ arrays, $j$ arrays get $0110$ (corresponding to position $(01,01)$ in the matrix), $k$ arrays get $1001$, and rest $i=m-j-k$ arrays get either $0011$ or $1100$.
Because $i+j+k=m$, we also use $x_{i,j}$
to denote the value $x_{i,j,k}$.
Similarly we use  $x_{j,k}$ or $x_{i,k}$ to denote the same value $x_{i,j,k}$
when there is no confusion.
%%% technically this is wrong : e.g., what is x_{1,1} ??

Generally, in an interpolation reduction, we pick polynomially many values of $s$, and get a system of linear equations in $x_{i,j,k}$.
When all $a^i (b+c)^j (b-c)^k$are distinct, for $i+j+k=m$, we get a full rank Vandermonde coefficient matrix, and then we can solve for
each $x_{i,j,k}$. Once we have $x_{i,j,k}$ we can compute any
function in $x_{i,j,k}$.

When $a^i (b+c)^j (b-c)^k$ are not distinct, say $a^{i} (b+c)^{j} (b-c)^{k}
=a^{i'} (b+c)^{j'} (b-c)^{k'}$, we may define a new variable $y=x_{i,j,k}
+x_{i',j',k'}$. We can combine all $x_{i,j,k}$
with the same $a^i (b+c)^j (b-c)^k$.
% that have the same value by summing up the corresponding unknowns $x_{i,j,k}$ as a new unknown.
Then we have a
full rank Vandermonde system of linear equations in these new unknowns.
We can solve all new unknowns
 and then sum them up to get $\sum_{i+j+k=m}  x_{i,j,k}$.
This is one special  function in $x_{i,j,k}$.

The above are two typical application methods in this kind of interpolation.
Unfortunately in our case, we may have a rank deficient Vandermonde system,
and the sum  $\sum_{i+j+k=m}  x_{i,j,k}$ does not give us anything
useful.
% We illustate this point below.
This is because if we
replace $a^{si} (b+c)^{sj}  (b-c)^{sk}$ by the constant value $1$
in equation (\ref{eqa: interpo}), we get $\sum_{i+j+k=m}  x_{i,j,k}$.
%Equation (\ref{eqa: interpo}) is nothing but merging the nonzero entries in
%$\langle X, \Lambda_s^{\otimes m}  \rangle$ by taking out the common factors on the right side.
Thus,
%However,
$\sum_{i+j+k=m}  x_{i,j,k}$ corresponds to $\Omega'_s$ with
 all nonzero values in $\Lambda_s$ replaced by the constant $1$,
i.e., we get a reduction from the problem $M(1,1,0)$.
%That is, $\Lambda_s$ is replaced by the support of $\Lambda_s$.
%
%Hence, we realize a
%Therefore we obtain $F=P \ \text{support}(\Lambda_s) P= M(1,1,0)$.
 %\[\sum_{i+j+k=m}  x_{i,j,k}=\operatorname{Holant}_{\Omega}(\{\neq_2\}
%\mid \{M(1,1,0)\}).\]
But  $M(1,1,0)$ is a tractable problem, and so
we do not get any hardness result by such a reduction.

%Unfortunately, $\#\{\neq_2\}|\{M(1,1,0)\}$ is a tractable problem, so we do not get any hardness result by such a reduction.
%It seems that we do need the condition that $a^i (b+c)^j (b-c)^k, i+j+k=m$ are distinct.
%However, a later detailed analysis and application of this kind of interpolation without this condition, squeeze out something more than $\sum_{i+j+k=m}  x_{i,j,k}$. It seems extracting out the last power of this system of linear equations, and building the base of the proof for 3 twins case.

To prove this lemma, there are three cases when there are 3 twins.
\begin{enumerate}
\item
Two elements in $\{a,b,c\}$ are equal.
By the symmetry of the group action of $S_4$,
without loss of generality, we may assume $b=c$. We have

\[\Lambda_s=
\left [ \begin{array}{cccc}
0 & 0 & 0 & a^s \\
0 & (2b)^s & 0 & 0 \\
0 & 0 & 0 & 0 \\
a^s & 0 & 0 & 0
\end{array} \right ],
\] and equation (\ref{eqa: interpo}) becomes
$\operatorname{Holant}_{\Omega_s}
=\sum_{i+j=m}  x_{i,j} a^{si} (2b)^{sj}$.
Note that all terms $x_{i,j,k}$ with $k\not = 0$ have disappeared.
We can interpolate to get
$\sum_{i+j=m} x_{i,j}$. This sum corresponds to a \#P-hard
problem. In  fact we define $A=
\left [ \begin{smallmatrix}
0 & 0 & 0 & 1 \\
0 & 1 & 0 & 0 \\
0 & 0 & 0 & 0 \\
1 & 0 & 0 & 0
\end{smallmatrix} \right ]$, and then
%$PAP=\left [ \begin{array}{cccc}
$PAP= \left [ \begin{smallmatrix}
0 & 0 & 0 & 1 \\
0 & \frac{1}{2} & \frac{1}{2} & 0 \\
0 & \frac{1}{2} & \frac{1}{2} & 0 \\
1 & 0 & 0 & 0
\end{smallmatrix} \right ]$.
%\end{array} \right ]$.
Then $\operatorname{Holant}(\neq_2 \mid M(1,\frac{1}{2},\frac{1}{2}))$
 is \#P-hard by the determinant criterion for redundant matrices,
Theorem~\ref{redundant}.

\item
Two elements in $\{a,b,c\}$ have the opposite value.
By the symmetry of the group action of $S_4$,
without loss of generality, we may assume $b=-c$. We have
\[\Lambda_s=
\left [ \begin{array}{cccc}
0 & 0 & 0 & a^s \\
0 & 0 & 0 & 0 \\
0 & 0 & (2b)^s & 0 \\
a^s & 0 & 0 & 0
\end{array} \right ],
\]
 and equation (\ref{eqa: interpo}) becomes
$\operatorname{Holant}_{\Omega_s}
=\sum_{i+k=m}  x_{i,k} a^{si} (2b)^{sk}$.
Simiarly note that all terms $x_{i,j,k}$ with $j\not = 0$ have disappeared.
  We can interpolate to get
$\sum_{i+k=m} x_{i,k}$. Simiarly
we show that this sum corresponds to a \#P-hard
problem. In  fact we define $B=
%\left [ \begin{array}{cccc}
 \left [ \begin{smallmatrix}
0 & 0 & 0 & 1 \\
0 & 0 & 0 & 0 \\
0 & 0 & 1 & 0 \\
1 & 0 & 0 & 0
\end{smallmatrix} \right ]$.
%\end{array} \right ]$.
% Then $PBP=\left [ \begin{array}{cccc}
Then $PBP=
\left [ \begin{smallmatrix}
0 & 0 & 0 & 1 \\
0 & \frac{1}{2} & -\frac{1}{2} & 0 \\
0 & -\frac{1}{2} & \frac{1}{2} & 0 \\
1 & 0 & 0 & 0
\end{smallmatrix} \right ]$.
%\end{array} \right ]$.
This matrix defines
the problem  $\operatorname{Holant}(\neq_2 \mid M(2,1,-1))$,
up to a  nonzero constant factor.

By the group action we also have $M(-1,2,1)$.
If we link two copies of $M(-1,2,1)$ by $N$, we get
$M(1,5,4)$, because
$\left[\begin{smallmatrix}
2 & 1 \\
 1 & 2
\end{smallmatrix}\right]^2
=
\left[\begin{smallmatrix}
5 & 4 \\
4 & 5
\end{smallmatrix}\right]$.

%Note that $M(-1,2,1)^2=M(1,5,4)$.
%%% not really square !

Then $M(1,5,4)=P \Lambda P$, where
$\Lambda=
\left [ \begin{smallmatrix}
0 & 0 & 0 & 1 \\
0 & 9 & 0 & 0 \\
0 & 0 & 1 & 0 \\
1 & 0 & 0 & 0
\end{smallmatrix} \right ]$.

There are only two nonzero values $9$ and $1$ in $\Lambda$.
For $M(1,5,4)$, we have
$\operatorname{Holant}_{\Omega_s}
=\sum_{0 \le i \le m}  x_{i} 9^{si}$, from which we can solve all $x_i$
($i=0,1,\ldots,m$), we can compute $\sum_{0 \le i \le m}  x_{i} 3^{si}$.
This realizes the following problem $
\left [ \begin{smallmatrix}
0 & 0 & 0 & 1 \\
0 & 3 & 0 & 0 \\
0 & 0 & 1 & 0 \\
1 & 0 & 0 & 0
\end{smallmatrix} \right ]$, which gives us $\operatorname{Holant}(\neq_2
\mid M(1,2,1) )$.
By the symmetry of group action
we also have $\operatorname{Holant}(\neq_2 \mid
M(2,1,1))$, which is  \#P-hard by
Theorem~\ref{redundant}.

\item
If we consider $a$, $b$ and  $c$ as three nonzero complex numbers
on the plane, there
 are two elements in $\{a,b,c\}$ which are not orthogonal as vectors.
% corresponding the complex numbers.
By the symmetry of group action of $S_4$,
we may assume $b$ and $c$ are not orthogonal.
If $b+c =0$ or $b-c = 0$, then it is already proved in the first two cases.
So we may assume $b \not = \pm c$.

By the interpolation method, we have a system of linear
 equations in $x_{i,j,k}$,
whose coefficient matrix  $( (a^i (b+c)^j (b-c)^k)^s )$ has
 row index $s$ and column index from $\{(i,j,k) \mid i,j,k \in \mathbb{N},  i+j+k=m\}$.

Let $\alpha=\frac{b+c}{a}$ and $\beta=\frac{b-c}{a}$.
Then  they have different norms $|\alpha| \neq |\beta|$.
Indeed, if $|\alpha| = |\beta|$ then $|1 +c/b| = |1 - c/b|$
which means that $c/b \in {\frak i} \mathbb{R}$ is purely imaginary,
i.e., $b$ and $c$ are orthogonal.

The matrix $( (a^i (b+c)^j (b-c)^k)^s )$,
after dividing the $s$th row by $a^{sm}$,
has the form $( (\alpha^{j} \beta^k)^s)$, which
 is a Vandermonde matrix with  row index $s$ and column index
from $\{(j,k) \mid j,k \in \mathbb{N},  j+k \leq m\}$.
Define $L=\{(j,k) \in \mathbb{Z}^2 \mid \alpha^j \beta^k=1 \}$.
This is a sublattice of $\mathbb{Z}^2$. Every lattice has a basis.
There are 3 cases depending on the rank of $L$.
\begin{enumerate}
\item
$L=\{(0,0)\}$. All $\alpha^j \beta^k$ are distinct. It is an interpolation reduction in full power. We can realize $
\left [ \begin{smallmatrix}
0 & 0 & 0 & 1 \\
0 & 3 & 0 & 0 \\
0 & 0 & 1 & 0 \\
1 & 0 & 0 & 0
\end{smallmatrix} \right ]$. This  corresponds to
 $\operatorname{Holant}(\neq_2 \mid
M(2,1,1))$, which is \#P-hard by Theorem~\ref{redundant}.

\item
$L$ contains two  vectors $(j_1,k_1)$ and $(j_2,k_2)$
independent over $\mathbb{Q}$.
Then the nonzero vectors  $j_2 (j_1,k_1)-j_1 (j_2,k_2)=(0,j_2k_1-j_1k_2)$ and
$k_2 (j_1,k_1)-k_1 (j_2,k_2)=(k_2j_1-k_1 j_2,0)$ are in $L$. Hence, both $\alpha$ and $\beta$ are roots of unity, but this contradicts with $|{\alpha}| \neq
|{\beta}|$.

\item
$L=\{(ns, nt) \mid n \in \mathbb{Z}\}$, where $s,t \in \mathbb{Z}$ and
 $(s,t) \neq (0,0)$.
We know that $s+t \neq 0$, otherwise we get $|{\alpha}| \neq
|{\beta}|$.
By Lemma~\ref{lem:interpolation}, for any numbers $\phi$ and $\psi$ satisfying $\phi^s \psi^t=1$, we can compute $\sum_{j+k\leq m} \phi^j \psi^k x_{j,k}$ efficiently.

Define $A=
%\left [ \begin{array}{cccc}
\left [ \begin{smallmatrix}
0 & 0 & 0 & 1 \\
0 & \phi & 0 & 0 \\
0 & 0 & \psi & 0 \\
1 & 0 & 0 & 0
\end{smallmatrix} \right ]$,  and we have
$2PAP= \left [ \begin{smallmatrix}
%\end{array} \right ]$, and we have $2PAP=\left [ \begin{array}{cccc}
0 & 0 & 0 & 2 \\
0 & \phi+\psi & \phi-\psi & 0 \\
0 & \phi-\psi & \phi+\psi & 0 \\
2 & 0 & 0 & 0
\end{smallmatrix} \right ]$.
%\end{array} \right ]$.
We get
$\operatorname{Holant}(\neq_2 \mid
M(2,\phi+\psi,\phi-\psi))$.

\begin{enumerate}
\item
$t=0$. Without loss of generality  $s>0$. Let $\phi=1$ and $\psi=1/2$.
 We get $M(4,3,1)$, from which we can get  $M(1,4,3)$
by the $S_4$ group symmetry.
This is \#P-hard by the same proof method as we prove $M(1,5,4)$ is
\#P-hard in Case 2.
%%%%%%%%%%%% here

\item
$t>0$ and $s \geq 0$. Let $\phi=\psi+2$. We need $f(\psi)=(\psi+2)^s \psi^t=1$. Because $f(0)=0<1$ and $f(1)\geq 1$, there is a root $\psi_0 \in (0,1]$. We get $M(2,2\psi_0+2,2)$, which is \#P-hard by Case 1.

\item
$t>0$, $s<0$ and $|t|>|s|$. Let $\phi=\psi+2$. $\psi^{|t|}=(\psi+2)^{|s|}$ has a solution $\psi_0$ in $(1,\infty)$. We get $M(2,2\psi_0+2,2)$, which is \#P-hard by Case 1.

\item
$t>0$, $s<0$ and $|t|<|s|$. Let $\psi=\phi+2$. $\phi^{|s|}=(\phi+2)^{|t|}$ has a solution $\phi_0$ in $(1,\infty)$. We get $M(2,2\phi_0+2,-2)$, which is \#P-hard by Case 2.
\end{enumerate}
\end{enumerate}
\end{enumerate}
\end{proof}

We finish this section by proving the other no zero cases can realize 3-twins.

\begin{lemma}  \label{no zero}
Let $f$ be a 4-ary signature with the signature matrix
$M_{x_1x_2, x_4x_3}(f)=\left[\begin{smallmatrix}
0 & 0 & 0 & a\\
0 & b & c & 0\\
0 & z & y & 0\\
x & 0 & 0 & 0\\
\end{smallmatrix}\right]$ with $abcxyz\neq 0$.
%\end{bmatrix}$ with $abcxyz\neq 0$.
Then $\operatorname{Holant}(\neq_2\mid f)$ is \#P-hard.
\end{lemma}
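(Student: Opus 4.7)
Plan. The goal is to realize, from $f$, either a 3-twins signature (then finished by Lemma~\ref{3twin}) or a one-zero-pair signature (then finished by Lemma~\ref{spin}), and inherit \#P-hardness from the appropriate earlier lemma.

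Stage~1: a two-twin gadget via the $S_4$ symmetry. Because $S_4$ acts freely on the variables $x_1,\ldots,x_4$, any permuted copy $\tilde f_\sigma$ of $f$ is available at zero cost. Pick the kernel element $\sigma = (13)(24)\in K$, which flips values within pairs 1 and 2 while fixing pair 3, yielding $\tilde f_\sigma = M(x,a,y,b,c,z)$. Using the explicit matrix-product formula for linking two weight-2-supported 4-ary signatures through the double-disequality $N$, I obtain
\[
g \;=\; M\cdot N\cdot M(\tilde f_\sigma) \;=\; M\bigl(ax,\;ax,\;bz+cy,\;bz+cy,\;b^{2}+c^{2},\;y^{2}+z^{2}\bigr),
\]
so pairs 1 and 2 of $g$ are automatically twins. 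Set $\beta=bz+cy$, $\gamma_1=b^{2}+c^{2}$, $\gamma_2=y^{2}+z^{2}$.

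Stage~2: the symmetric subcases. If $\gamma_1=\gamma_2$, then $g$ itself is a 3-twins signature with generically all-nonzero entries, and Lemma~\ref{3twin} concludes after the routine check of non-membership in $\mathscr{A}\cup\mathscr{P}$. If $\gamma_1^{2}=\gamma_2^{2}$ but $\gamma_1\neq\gamma_2$ (so $\gamma_2=-\gamma_1$), link $g$ with the copy $\tilde g$ obtained by swapping its pairs 2 and 3, via the connector $N$: the same matrix-product formula gives
\[
g\cdot N\cdot M(\tilde g) \;=\; M\bigl((ax)^{2},\,(ax)^{2},\,\beta^{2}+\gamma_1^{2},\,\beta^{2}+\gamma_1^{2},\,\beta(\gamma_1+\gamma_2),\,\beta(\gamma_1+\gamma_2)\bigr);
\]
since $\gamma_1+\gamma_2=0$, the pair-3 entries vanish, producing a one-zero-pair signature, and Lemma~\ref{spin} concludes.

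Stage~3: interpolation in the generic case $\gamma_1^{2}\neq\gamma_2^{2}$. Form the chain $D_s = g\cdot(Ng')^{s-1}$, where $g'$ denotes $g$ with its middle two rows swapped. The corners of $D_s$ are $(ax)^{s}$; the middle $2\times 2$ block is $(\beta I+K)^{s}$, with $K=\left[\begin{smallmatrix}0&\gamma_1\\ \gamma_2&0\end{smallmatrix}\right]$ satisfying $K^{2}=\gamma_1\gamma_2\,I$, whose eigenvalues are $\mu_{\pm}=\beta\pm\sqrt{\gamma_1\gamma_2}$. Replacing $m$ occurrences of $f$ in an instance by $D_s$ turns the Holant value into a polynomial in the three spectral quantities $(ax)^{s},\mu_{+}^{s},\mu_{-}^{s}$. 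Applying Lemma~\ref{lem:interpolation} in its natural three-variable extension and splitting into cases according to the integer lattice $L$ of multiplicative relations among $ax,\mu_{+},\mu_{-}$ (four subcases paralleling those in the proof of Lemma~\ref{3twin}), I isolate Holant values corresponding to a 3-twins target signature with all nonzero entries outside $\mathscr{A}\cup\mathscr{P}$, and conclude by Lemma~\ref{3twin}. The principal obstacle is precisely Stage~3: the lattice $L$ can have rank 0, 1, or 2, forcing the interpolation to work on cosets, and the bookkeeping required to guarantee that the realized 3-twins target avoids $\mathscr{A}\cup\mathscr{P}$—across every lattice rank and every algebraically degenerate sub-subcase (vanishing discriminant $\gamma_1\gamma_2=0$, $\beta=0$, $\mu_{+}/\mu_{-}$ a root of unity, and so on)—is the most delicate step. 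When a subcase resists direct resolution, one falls back either on a different $\sigma\in S_4$ in Stage~1 or on auxiliary binary signatures produced by self-loops on $f$ used as alternative connectors.
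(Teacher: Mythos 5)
Your Stage~1 gadget is a legitimate analogue of the paper's opening move: the paper links $M_{x_1x_2,x_4x_3}(f)$ to $M_{x_4x_3,x_1x_2}(f)$ via $N$ and gets $M(ax,ax,2bc,2yz,by{+}cz,by{+}cz)$, while you pick a different permuted copy and get $M(ax,ax,bz{+}cy,bz{+}cy,b^2{+}c^2,y^2{+}z^2)$; both are valid two-twins gadgets. The divergence is in what comes next. The paper simply \emph{iterates the same gadget once more} with a permuted form to reach a genuine 3-twins signature, and then notices that the failure conditions of this two-step iteration, read across the three pair-symmetric orientations, amount to a linear dependence that would force $\det\left[\begin{smallmatrix}\alpha & 1 & 1\\ 1 & \beta & 1\\ 1 & 1 & \gamma\end{smallmatrix}\right]=0$ for some $\alpha,\beta,\gamma\in\{0,\mathfrak{i},-\mathfrak{i}\}$, which is impossible since the determinant equals $\alpha\beta\gamma-2-\alpha-\beta-\gamma$ and $|2+\alpha+\beta+\gamma|\ge 2 > 1\ge|\alpha\beta\gamma|$. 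That observation handles all the degenerate subcases at once, with no interpolation.

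Your Stage~3, by contrast, has a structural flaw. When $\gamma_1\ne\gamma_2$, the middle block $\beta I+K$ with $K=\left[\begin{smallmatrix}0&\gamma_1\\ \gamma_2&0\end{smallmatrix}\right]$ is diagonalized not by $H$ but by $P=\left[\begin{smallmatrix}1&1\\ r&-r\end{smallmatrix}\right]$ with $r=\sqrt{\gamma_2/\gamma_1}$, and $P\ne P^{-1}$. The signature realized by replacing $\Lambda_s$ with any diagonal $\Lambda=\operatorname{diag}(\phi,\psi)$ is then $P\Lambda P^{-1}=\frac{1}{2}\left[\begin{smallmatrix}\phi+\psi & (\phi-\psi)/r\\ r(\phi-\psi) & \phi+\psi\end{smallmatrix}\right]$, whose off-diagonal entries have ratio exactly $\gamma_1:\gamma_2$. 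So the interpolation cannot produce a 3-twins target at all (except the degenerate one-zero-pair choice $\phi=\psi$); it only reproduces a two-twins signature with the same pair-3 ratio you started with, and Lemma~\ref{3twin} cannot be invoked as you claim. This is not a bookkeeping issue but a dead end. Similarly, Stage~2's reliance on Lemma~\ref{spin} is not automatic — that lemma gives hardness only when the resulting one-zero-pair signature lies outside $\mathscr{A}\cup\mathscr{P}$, which needs verification — and the degenerate subcases ($\beta=0$, $\gamma_1=0$, $\gamma_1\gamma_2=0$ so the middle block is a Jordan block, $\beta^2+\gamma_1^2=0$, etc.) are left to a hand-waving fallback. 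You should abandon the interpolation route here and instead iterate the two-twins gadget once more and exploit the three-pair symmetry as the paper does.
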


\begin{proof}
Note that
$M_{x_4x_3, x_1x_2}(f)=
%\left [ \begin{array}{cccc}
\left[\begin{smallmatrix}
0 & 0 & 0 & x \\
0 & b & z & 0 \\
0 & c & y & 0 \\
a & 0 & 0 & 0
\end{smallmatrix}\right]$.
%\end{array} \right ]$.
Connecting two copies of $f$ back to back
 by double {\sc Disequality} $N$, we get the gadget whose signature has the signature matrix
\begin{center}
$M_{x_1x_2, x_4x_3}(f)NM_{x_4x_3, x_1x_2}(f)=\left [ \begin{array}{cccc}
0 & 0 & 0 & ax \\
0 & 2bc & by+cz & 0 \\
0 & by+cz & 2yz & 0 \\
ax & 0 & 0 & 0
\end{array} \right ].$
\end{center}

If $by+cz \neq 0$, we have realized a function $M(ax,ax,2bc,2yz,by+cz,by+cz)$ of two twins, with all nonzero values.
We can use $M(2bc,2yz,ax,ax,by+cz,by+cz)$ to construct the following function by the same gadget \[M(4bcyz,4bcyz,2ax(by+cz),2ax(by+cz),a^2x^2+(by+cz)^2,a^2x^2+(by+cz)^2).\]
If furthermore $a^2x^2+(by+cz)^2 \neq 0$, we get a nonzero 3-twins function and we can finish the proof by Lemma~\ref{3twin}.
If this process fails, we get a condition that either $by+cz=0$  or $\mathfrak{i}ax+by+cz=0$ or  $-\mathfrak{i}ax+by+cz=0$.
Recall the symmetry among the 3 pairs $(a,x), (b,y), (c,z)$. If we apply this process with a permuted form of $M$, we will get either $ax+cz=0$  or $ax+\mathfrak{i}by+cz=0$ or  $ax-\mathfrak{i}by+cz=0$.
There is one more permutation of $M$ which
gives us either $ax+by=0$  or $ax+by+\mathfrak{i}cz=0$ or  $ax+by-\mathfrak{i}cz=0$.

We claim that, when $axbycz \neq 0$, the 3 Boolean disjunction
 conditions can not hold simultaneously. Hence,
one of three constructions will succeed and give us \#P-hardness.

To prove the claim, we assume
that all 3 disjunction conditions hold. Then we
get 3 conjunctions, each  a disjunction of 3 linear equations.
Each equation is a homogeneous linear equation on $(ax, by, cz)$.
The 3 equations in the first conjunction all have the form
$\alpha \cdot ax+1 \cdot by + 1 \cdot cz=0$ where
$\alpha \in \{0,\mathfrak{i},-\mathfrak{i}\}$.
Similarly the 3 equations in the second and third conjunction all have the form
$1 \cdot ax+ \beta \cdot by + 1 \cdot cz=0$
and
$1 \cdot ax+  1 \cdot by + \gamma \cdot cz=0$
respectively.
  If at least one equation holds in each of the  3 sets of linear equations with nonzero solution $(ax, by, cz)$, the following determinant
\begin{equation}\label{equ:det} \det \left [ \begin{array}{ccc}
\alpha & 1 & 1  \\
1 & \beta &  1 \\
1 & 1 & \gamma
\end{array} \right ]=0,  \end{equation}
for some $\alpha, \beta, \gamma \in \{0,\mathfrak{i},-\mathfrak{i}\}$.
%For example, one of the 3 equations in the first system holds implies that there is a choice of $\alpha \in \{0,\mathfrak{i},-\mathfrak{i}\}$ such that $\alpha \cdot ax+1 \cdot by + 1 \cdot cz=0$.
However, there are no choices of $\alpha, \beta, \gamma \in \{0,\mathfrak{i},-\mathfrak{i}\}$ such that Equation (\ref{equ:det}) holds:
The determinant is $\alpha \beta \gamma - 2 - \alpha - \beta - \gamma$.
For $\alpha, \beta, \gamma \in \{0,\mathfrak{i},-\mathfrak{i}\}$, the norm
$|2 + \alpha +  \beta +  \gamma| \ge 2$,
% is either $2$
%or $\sqrt{5}$ or $\sqrt{8}$ or $\sqrt{13}$,
but $|\alpha \beta \gamma| = 0$ or $1$.

\end{proof}

\section{Case~\ref{exactly-one-zero}: Exactly one zero}

\begin{lemma}\label{exact:one:zero}
Let $f$ be a 4-ary signature with the signature matrix
\begin{center}
$M_{x_1x_2, x_4x_3}(f)=\begin{bmatrix}
0 & 0 & 0 & a\\
0 & b & c & 0\\
0 & z & y & 0\\
x & 0 & 0 & 0\\
\end{bmatrix}$,
\end{center}
where there is exactly one of $\{a, b, c, x, y, z\}$ that is zero,
then $\operatorname{Holant}(\neq_2\mid f)$
is \#P-hard.
\end{lemma}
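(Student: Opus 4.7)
The plan is to reduce from Case~\ref{all-nonzero-case} (Lemma~\ref{no zero}): I will construct a gadget from $f$ whose resulting 4-ary signature has all six values nonzero, and then invoke Lemma~\ref{no zero}. By the $S_4$ symmetry on the three complementary pairs (together with the allowed even flips within pairs), I may assume without loss of generality that the unique zero is at the $z$ position, so $a,b,c,x,y$ are nonzero and the inner block of $M=M_{x_1x_2,x_4x_3}(f)$ is the upper-triangular matrix $B=\left[\begin{smallmatrix} b & c \\ 0 & y \end{smallmatrix}\right]$.

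First I would try the natural two-copy gadget $MNM$. Writing $M$ in block form with outer anti-diagonal entries $(a,x)$ and inner block $B$, the block structure of $N$ makes the calculation clean: the resulting signature matrix has outer entries $a^2,x^2$ and inner block $B\,S\,B=\left[\begin{smallmatrix} bc & c^2+by \\ yb & yc \end{smallmatrix}\right]$, where $S=\left[\begin{smallmatrix} 0 & 1 \\ 1 & 0 \end{smallmatrix}\right]$. Whenever $c^2+by\neq 0$, all six values of the new signature are nonzero and Lemma~\ref{no zero} delivers \#P-hardness.

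The main obstacle is the exceptional locus $c^2+by=0$, where $MNM$ produces yet another signature with a single zero. To handle this I would use a three-copy gadget built in two stages. First, form the intermediate signature $M'=M\,N\,M_{x_4x_3,x_1x_2}(f)$ of the proof of Lemma~\ref{no zero}; specializing that paper's formula at $z=0$ gives $(a',x',b',y',c',z')=(ax,ax,2bc,0,by,by)$, a two-twin, one-zero signature. Next, connect one fresh copy of $M$ to one copy of $M'$ through a single $N$: the composed signature matrix $M\,N\,M'$ has outer entries $a^2x,\,ax^2$ and, by the same block computation, inner block $b\left[\begin{smallmatrix} 2c^2+by & cy \\ 2cy & y^2 \end{smallmatrix}\right]$. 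The key algebraic point is that substituting the exceptional relation $by=-c^2$ turns the potentially vanishing entry $2c^2+by$ into $c^2\neq 0$, while the three other inner entries are manifestly nonzero products of $b,c,y$. Thus the final 4-ary signature has all six values nonzero, and Lemma~\ref{no zero} again applies.

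The hardest part is precisely the small algebraic miracle in the previous paragraph: although the simpler gadget $MNM$ breaks down on the locus $c^2+by=0$, the composed gadget $MNM'$ succeeds there because the top-left inner entry $2c^2+by$ collapses to $c^2$ under the same relation. Everything else is routine block multiplication plus the initial $S_4$ reduction that places the zero at $z$.
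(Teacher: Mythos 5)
Your proof is correct, and it takes a genuinely different route from the paper's. I verified the computations: with $z=0$, writing $M$ in block form with inner block $B=\left[\begin{smallmatrix} b & c \\ 0 & y \end{smallmatrix}\right]$, the two-copy gadget $MNM$ indeed has inner block $BSB=\left[\begin{smallmatrix} bc & c^2+by \\ yb & yc \end{smallmatrix}\right]$ and outer entries $a^2, x^2$, so it succeeds unless $c^2+by=0$; and on that locus, the three-copy gadget $MN M'$ with $M'=MNM^{\tt T}$ has inner block $BSB' = b\left[\begin{smallmatrix} 2c^2+by & cy \\ 2cy & y^2 \end{smallmatrix}\right]$, whose top-left entry collapses to $bc^2\neq 0$ under $by=-c^2$, while the remaining entries and the outer entries $a^2x, ax^2$ are manifestly nonzero. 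Both gadgets are realizable compositions via double {\sc Disequality}, so Lemma~\ref{no zero} applies in both branches, and the initial $S_4$ normalization placing the zero at $z$ is legitimate.

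The paper's proof takes a noticeably more roundabout route. It normalizes the zero to $b$, tries three different two-copy gadgets (various variable reorderings in $M_\sigma N M_\tau$), obtaining the three exceptional conditions $y^2+z^2=0$, $y^2+cz=0$, $y^2+c^2=0$; if all fail it solves this system to pin down $z=c$, $y=\pm\mathfrak{i}c$, and only then constructs $MNM^{\tt T}$ and re-applies the first gadget to it, for a four-copy total. Your version needs only one exceptional condition and one three-copy gadget, with the ``algebraic miracle'' being the observation that the doubling $2bc$ and $2c^2$ produced by the two-copy gadget $MNM^{\tt T}$ shifts the vanishing locus by exactly the right amount ($2c^2+by = c^2$ on the locus). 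This is shorter and arguably more transparent than the paper's case analysis. Both approaches bottom out in Lemma~\ref{no zero} as intended by Case~\ref{exactly-one-zero} of the proof outline.
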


\begin{proof}
Without loss of generality, we can assume that $b=0$.
Note that
$M_{x_3x_4, x_1x_2}(f)=\left[\begin{smallmatrix}
0 & 0 & 0 & x\\
0 & c & y & 0\\
0 & 0 & z & 0\\
a & 0 & 0 & 0\\
\end{smallmatrix}\right]$.
Connecting a copy of $f$ with this via $N$, we get a
signature $g$ with signature matrix
\begin{center}
$M_{x_1x_2, x_4x_3}(f) N M_{x_3x_4, x_1x_2}(f)=\begin{bmatrix}
0 & 0 & 0 & ax\\
0 & c^2 & cy & 0\\
0 & cy & y^2+z^2 & 0\\
ax & 0 & 0 & 0\\
\end{bmatrix}$.
\end{center}
If $y^2+z^2 \neq 0$,
by Lemma~\ref{no zero}, Holant$(\neq_2\mid g)$ is \#P-hard.
Thus Holant$(\neq_2\mid f)$ is \#P-hard.
Otherwise, we have
\begin{center}
$y^2+z^2=0$.
\end{center}
Similarly, $M_{x_3x_4, x_1x_2}(f) N M_{x_4x_3, x_2x_1}(f)$ gives us
\begin{center}
$y^2+cz=0$.
\end{center}
$M_{x_4x_3, x_1x_2}(f) N M_{x_2x_1, x_4x_3}(f)$ gives us
\begin{center}
$y^2+c^2=0$.
\end{center}
From these equations, we get $c^2=z^2=cz=-y^2$.
This gives us $z=c$ and $y=\pm \mathfrak{i}c$, and
$M=M_{x_1x_2, x_4x_3}(f)=\left[\begin{smallmatrix}
0 & 0 & 0 & a\\
0 & 0 & c & 0\\
0 & c & \pm \mathfrak{i} c & 0\\
x & 0 & 0 & 0\\
\end{smallmatrix}\right].$

For this matrix $M$, we may construct $MNM^{\tt T}=\left[\begin{smallmatrix}
0 & 0 & 0 & ax\\
0 & 0 & c^2 & 0\\
0 & c^2 & \pm 2\mathfrak{i}c^2 & 0\\
ax & 0 & 0 & 0\\
\end{smallmatrix}\right]$.  Now we may repeat the construction from the beginning
 using $MNM^{\tt T}$ instead of $M$.
Because $(c^2)^2+(\pm 2\mathfrak{i}c^2)^2 \neq 0$, we get a function of 6 nonzero values. By Lemma~\ref{no zero}, $\operatorname{Holant}(\neq_2\mid f)$ is \#P-hard.

%Connect the variables $x_3, x_4$ of $f$ by $\neq_2$,   we get then binary signature
%%\begin{center}
%%$ M_{x_1x_2, x_4x_3}(f) \begin{bmatrix}
%%0 \\ 1 \\ 1 \\ 0 \\ \end{bmatrix} = $
%$(0, c, c(1\pm i),0)$.
%%\end{center}
%
%%This is a weighted binary inequality $\begin{bmatrix}
%%0 & 1 \\
%%1\pm i & 0  \end{bmatrix}$.
%
%In the construction $M_{x_1x_2, x_4x_3}(f) N M_{x_3x_4, x_1x_2}(f)$, we use
%\begin{center}
%$T=\begin{bmatrix}
%0 & 1 \\
%1 & 0  \end{bmatrix} \otimes
%\begin{bmatrix}
%0 & 1 \\
%1\pm i & 0  \end{bmatrix}=\begin{bmatrix}
%0 & 0 & 0 & 1\\
%0 & 0 & 1 & 0\\
%0 & 1\pm i & 0 & 0\\
%1\pm i & 0 & 0 & 0\\
%\end{bmatrix}$
%\end{center}
%instead of $N$.
%Then we get the signature $h$ with the signature matrix
%\begin{center}
%$M_{x_1x_2, x_4x_3}(f) T M_{x_3x_4, x_1x_2}(f)=
%\begin{bmatrix}
%0 & 0 & 0 & (1\pm i)ax\\
%0 & (1\pm i)c^2 & (1\pm i)cy & 0\\
%0 & (1\pm i)cy & \pm y^2 & 0\\
%ax & 0 & 0 & 0\\
%\end{bmatrix}$.
%\end{center}
%By Lemma~\ref{no zero}, Holant$(\neq_2\mid h)$ is \#P-hard.
%Thus Holant$(\neq_2\mid f)$ is \#P-hard.
\end{proof}

\section{Case~\ref{exactly-two-zeros}: Exactly two zeros from distinct pairs}

\begin{lemma}\label{exact:two:zero}
Let $f$ be a 4-ary signature with the signature matrix
\begin{center}
$M_{x_1x_2, x_4x_3}(f)=\begin{bmatrix}
0 & 0 & 0 & a\\
0 & b & c & 0\\
0 & z & y & 0\\
x & 0 & 0 & 0\\
\end{bmatrix}$,
\end{center}
where there are exactly two zero entries in $\{a, b, c, x, y, z\}$
%%% JYC added 9-8-2016
and they are from distinct pairs,
 then
 $\operatorname{Holant}(\neq_2\mid f)$ is \#P-hard.
\end{lemma}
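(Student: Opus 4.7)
The plan is to reduce to the exactly-one-zero case of Lemma~\ref{exact:one:zero} via a binary gadget, with a small cascade of fall-backs for degenerate specializations.

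First I would argue, via the $S_{4}$ symmetry on the three pairs (arbitrary permutations of the pairs together with flipping an even number of them), that the $12$ positions ``two zeros drawn from two distinct pairs'' form a single orbit: a flip of pairs $2$ and $3$ sends $\{a,y\}$ to $\{a,b\}$, a flip of pairs $1$ and $3$ takes $\{a,b\}$ to $\{x,b\}$, and composing with row permutations sweeps out the remaining eight patterns. Hence WLOG $a=b=0$ and $cxyz\ne 0$, so that $M$ has all-zero top row and all-zero rightmost column.

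The main gadget joins two copies of $f$ back-to-back by double disequality but with the first two variables of the second copy swapped, i.e.\ it has signature matrix $M\cdot N\cdot M_{x_{2}x_{1},x_{4}x_{3}}(f)$. A direct matrix product shows that the resulting $4$-ary signature $f'$, placed in standard form, has parameters
\[(a',x',b',y',c',z')=(0,\;x^{2},\;cz,\;y^{2}+cz,\;cy,\;yz).\]
When $y^{2}+cz\ne 0$ this signature has exactly one zero entry (namely $a'=0$) while the remaining five entries are nonzero (as $cxyz\ne 0$); Lemma~\ref{exact:one:zero} then yields \#P-hardness of $\operatorname{Holant}(\neq_{2}\mid f')$ and hence of $\operatorname{Holant}(\neq_{2}\mid f)$.

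For the residual case $y^{2}+cz=0$, I would chain three copies and use the gadget $M\cdot N\cdot M\cdot N\cdot M_{x_{2}x_{1},x_{4}x_{3}}(f)$. Using the identity $y^{2}=-cz$ to simplify its $(2,2)$-entry gives a signature with parameters
\[(0,\;x^{3},\;c^{2}z,\;-c^{2}z,\;c^{2}y,\;yz(c+z)).\]
When $c+z\ne 0$ this again has exactly one zero and Lemma~\ref{exact:one:zero} closes the case. Otherwise $c+z=0$, which combined with $y^{2}=-cz$ forces $z=-c$ and $y=\pm c$, reducing $f$ to the explicit one-parameter family with parameters $(0,x,0,\pm c,c,-c)$. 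For this family I would apply the gadget $M\cdot N\cdot M^{T}$; a short computation yields a zero-pair signature (satisfying the hypothesis of Lemma~\ref{spin}) whose middle $2\times 2$ block is proportional to $\bigl[\begin{smallmatrix}0 & 1\\ 1 & \pm 2\end{smallmatrix}\bigr]$. This binary signature has support $\{01,10,11\}$, a three-element subset of $\mathbb{F}_{2}^{2}$ that is not an affine subspace (ruling out $\mathscr{A}$), and it has full rank with non-complementary support (ruling out $\mathscr{P}$). Lemma~\ref{spin} together with Theorem~\ref{csp:dichotomy} then delivers the final \#P-hardness.

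I expect the main obstacle to be the handling of the doubly-degenerate final subcase: verifying rigorously that the emerging $2\times 2$ spin matrix escapes both tractable classes $\mathscr{A}$ and $\mathscr{P}$ for every admissible value of $c$, and more broadly confirming that the two-tier degeneracy analysis is exhaustive and that no further specialization of parameters slips through the reduction.
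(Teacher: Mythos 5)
Your proof is correct, but it takes a noticeably longer route than the paper's, and the difference is illuminating. The paper uses the same $S_4$ symmetry to normalize the two zeros to the pattern $c=y=0$ with $ax\neq 0$, $bz\neq 0$; then the single back-to-back gadget $M_{x_1x_2,x_4x_3}(f)\,N\,M_{x_3x_4,x_1x_2}(f)$ produces the signature $M(ax,ax,b^2,z^2,bz,bz)$, which has all six entries nonzero, and Lemma~\ref{no zero} finishes the argument in one step. Your normalization $a=b=0$ is in the same orbit (your orbit count of $12$ is right), but it puts the two zeros at the $(1,4)$ and $(2,2)$ matrix positions, so $M$ has an all-zero first row and all-zero fourth column; consequently any chain gadget $M\cdot N\cdot(\cdots)$ inherits a zero first row and therefore always yields $a'=0$. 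This is why you are forced into the exactly-one-zero case and then into a two-tier cascade ($y^2+cz=0$, and then $c+z=0$) that finally bottoms out in the spin-system Lemma~\ref{spin}. I checked your three gadget computations — $M\cdot N\cdot M_{x_2x_1,x_4x_3}(f)$, the three-link chain, and $M\cdot N\cdot M^{\mathsf T}$ with $z=-c$, $y=\pm c$ — and they are all correct, as is your verification that the resulting binary spin signature with support $\{01,10,11\}$ (size $3$, not a power of two, hence neither affine nor of product type) lies outside $\mathscr{A}\cup\mathscr{P}$; combined with Theorem~\ref{csp:dichotomy} and the reduction inside Lemma~\ref{spin} this does deliver \#P-hardness. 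So the argument closes, but had you placed the two zeros in the middle $2\times 2$ block rather than in the corner pair, the cascade and the fall-back analysis would have been unnecessary.
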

\begin{proof}
Recall from Section~\ref{sec:preliminary} that
 we can
arbitrarily reorder the  three rows in
$\left[\begin{smallmatrix}
a & x \\
b & y \\
c & z
\end{smallmatrix}\right]$, and
we can also reverse arbitrary two rows.
Thus, we can assume that $ax\neq 0, bz\neq 0$ and $c=y=0$.
Note that $M_{x_1x_2, x_4x_3}(f)=\left[\begin{smallmatrix}
0 & 0 & 0 & a\\
0 & b & 0 & 0\\
0 & z & 0 & 0\\
x & 0 & 0 & 0\\
\end{smallmatrix}\right]$ and
$M_{x_3x_4, x_1x_2}(f)=\left[\begin{smallmatrix}
0 & 0 & 0 & x\\
0 & 0 & 0 & 0\\
0 & b & z & 0\\
a & 0 & 0 & 0\\
\end{smallmatrix}\right]$.
Take two copies of $f$. If
we connect the variables $x_4, x_3$ of the first function with the variables $x_3, x_4$ of the second function using $(\neq_2)$, we get a
signature $g$ with the signature matrix
\begin{center}
$M_{x_1x_2, x_4x_3}(f)NM_{x_3x_4, x_1x_2}(f)=
\begin{bmatrix}
0 & 0 & 0 & ax\\
0 & b^2 & bz & 0\\
0 & bz & z^2 & 0\\
ax & 0 & 0 & 0\\
\end{bmatrix}.
$
\end{center}
By Lemma~\ref{no zero}, Holant$(\neq_2\mid g)$ is \#P-hard.
Thus Holant$(\neq_2\mid f)$ is \#P-hard.
%
%
%**********************%
%
%Since there are exact two of $\{a, b, c, x, y, z\}$ that are zero,
%there is $\alpha\in\{0, 1\}^4$ such that $f_{\alpha}f_{\bar{\alpha}}\neq 0$, where the Hamming weight of $\alpha$ is 2.
%Without loss of generality, we can assume that $by\neq 0$.
%If $a=x=0$, or $c=z=0$, then we are done by Lemma~\ref{spin}.
%Otherwise, we have $ax=0, cz=0$
%Thus $ax=0$ and $\begin{bmatrix}
%b & c\\
%z & y
%\end{bmatrix}$ has full rank since its determinant is $by\neq 0$.
%By Lemma~\ref{interpolating:XOR}, we have the signature $\mathcal{X}$ with the signature matrix
%$M_{x_1x_2, x_4x_3}(\mathcal{X})=\begin{bmatrix}
%0 & 0 & 0 & 0\\
%0 & 1 & 0 & 0\\
%0 & 0 & 1 & 0\\
%0 & 0 & 0 & 0\\
%\end{bmatrix}$.
%By connecting the variables $x_4, x_3$ of $f$ to the variables $x_1, x_2$ of $\mathcal{X}$ respectively using
%$\neq_2$, we get the signature $h$ with the signature matrix
%\begin{center}
%$\begin{bmatrix}
%0 & 0 & 0 & a\\
%0 & b & z & 0\\
%0 & c & y & 0\\
%x & 0 & 0 & 0\\
%\end{bmatrix}
%\begin{bmatrix}
%0 & 0 & 0 & 0\\
%0 & 1 & 0 & 0\\
%0 & 0 & 1 & 0\\
%0 & 0 & 0 & 0\\
%\end{bmatrix}=\begin{bmatrix}
%0 & 0 & 0 & 0\\
%0 & b & c & 0\\
%0 & z & y & 0\\
%0 & 0 & 0 & 0\\
%\end{bmatrix}$.
%\end{center}
%Since $(b, c, z, y)\notin\mathscr{P}\cup\mathscr{A}$, $\operatorname{Holant}(\neq_2\mid h)$
%is \#P-hard by Lemma~\ref{spin}.
%Thus $\operatorname{Holant}(\neq_2\mid f)$
%is \#P-hard.
\end{proof}

\section{Case~\ref{one-zero-in-each-pair}: One zero in each pair}

\begin{lemma}
If there is one zero in each pair of $(a,x), (b,y), (c,z)$,
then $\operatorname{Holant}(\neq_2\mid f)$ is computable in
polynomial time.
\end{lemma}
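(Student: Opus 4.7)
The plan is to split each $f$-vertex in the Holant instance into a unary pinning signature and a $3$-ary restriction, then show both pieces lie in $\mathscr{M}$ or both in $\mathscr{M}'$, so that Theorem~\ref{holant*:dichotomy} applies.

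The crucial combinatorial observation is that in each of the $2^3=8$ possible supports of $f$ (one string chosen from each of the complementary pairs $\{0011,1100\}$, $\{0110,1001\}$, $\{0101,1010\}$), there is a unique variable $x_i$ whose value is constant across the three support strings, equal to some $v\in\{0,1\}$. Rather than do an eight-way case check, I would establish this uniformly. The three complementary pairs correspond exactly to the three perfect matchings of $\{1,2,3,4\}$ into two pairs, namely $\{\{1,2\},\{3,4\}\}$, $\{\{1,4\},\{2,3\}\}$, and $\{\{1,3\},\{2,4\}\}$, and selecting one string from each pair amounts to choosing one block from each matching, yielding three $2$-subsets of $\{1,2,3,4\}$. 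For each variable $x_j$, its membership vector $v(j)\in\mathbb{F}_2^3$ (coordinate $k$ indicates which block of matching $M_k$ contains $j$) ranges over the four vectors that form a Klein $4$-subgroup of $\mathbb{F}_2^3$. Any choice vector $\mathbf{c}\in\mathbb{F}_2^3$ lies in exactly one of the two cosets: if $\mathbf{c}=v(j)$ then $x_j$ is in every chosen block (constantly $1$), while if $\mathbf{c}=\overline{v(j)}$ then $x_j$ is in no chosen block (constantly $0$); in either case the index $j$ is unique.

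With the pinned variable $x_i$ and its fixed value $v$ in hand, I write $f(x_1,x_2,x_3,x_4) = u(x_i)\cdot g(x_{[4]\setminus\{i\}})$, where $u$ is the unary signature with $u(v)=1$ and $u(1-v)=0$, and $g = f|_{x_i=v}$ is the $3$-ary restriction. Since $f$ vanishes outside $\{x_i=v\}$, this factorization is an identity. In the Holant instance I then replace each $f$-vertex by two disconnected vertices carrying $u$ and $g$ respectively, leaving the $\neq_2$ midpoints on all four edges intact; the local product structure ensures the partition function is preserved, giving an equivalent instance of $\operatorname{Holant}(\neq_2\mid\{u,g\})$. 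The unary $u$ lies in $\mathscr{M}\cap\mathscr{M}'$. Since every support string of $f$ has Hamming weight $2$ and $x_i$ contributes $v$ to that weight, $g$ has support of uniform Hamming weight $2-v$: if $v=1$ then $g\in\mathscr{M}$, and if $v=0$ then $g$ has weight-$(k-1)$ support on $k=3$ variables, so $g\in\mathscr{M}'$. In either case $\{u,g\}\subseteq\mathscr{M}$ or $\{u,g\}\subseteq\mathscr{M}'$, and Theorem~\ref{holant*:dichotomy} delivers polynomial-time computability.

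I expect the main obstacle to be the uniform verification that a constant variable always exists in the support; once the Klein-subgroup structure is exhibited, the factorization and the class-membership checks for $g$ are routine.
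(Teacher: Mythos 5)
Your proof is correct and follows essentially the same route as the paper: identify the coordinate that is constant on the support, factor $f$ into a unary pinning times a $3$-ary signature, observe that both pieces lie together in $\mathscr{M}$ or together in $\mathscr{M}'$, and invoke Theorem~\ref{holant*:dichotomy}. The only difference is cosmetic — you establish the existence and uniqueness of the constant coordinate via the Klein four-group coset argument in $\mathbb{F}_2^3$, whereas the paper uses the $S_4$-symmetry to reduce to two explicit support triples; the decomposition and tractability argument are otherwise identical.
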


\begin{proof}
We will list the three strings of weight 2 where $f$ may be nonzero, by the symmetry of the group action of $S_4$.
We may assume the first string is $\xi=0011$.
The second string $\eta$, being not complementary to $\xi$ and of weight two, we may assume it is $0101$.

The third string $\zeta$, being not complementary
of either $\xi$ or $\eta$, and of weight two,
must be either $0110$ or $1001$.
%
%The third string $\zeta$, being not complementary to either $\xi$ or $\eta$ and of weight two, must have a common bit position where it agrees with both $\xi$ %and $\eta$.
%Otherwise, $\zeta_1=1$ and $\zeta_4=0$. We list the three strings as row vectors:
%{\scriptsize $\begin{array}{ccccccc}
%\xi & = & 0 & 0 & 1 & 1 \\
%\eta & = & 0 & 1 & 0 & 1 \\
%\zeta & = &  1 &    &  & 0
%\end{array}$.}
%Since ${\rm wt}(\zeta)=2$, it follows that $\zeta=1010$ or $1100$, which is complementary to $\eta$ or $\xi$  respectively.
%
%Thus either $\zeta_1=0$ and $\zeta_4=1$.
%If $\zeta_1=0$, from $\zeta \neq \xi$ and $\zeta \neq \eta$, we have $\zeta_2=\zeta_3=1$.
%Thus $\zeta=0110$.
%Similarly, if $\zeta_4=1$, then $\zeta=1001$.
%
Hence, {\scriptsize $\begin{array}{ccccccc}
\xi & = & 0 & 0 & 1 & 1 \\
\eta & = & 0 & 1 & 0 & 1 \\
\zeta & = & 0 & 1&  1 & 0
\end{array} ~~ \text{or}~~ \begin{array}{ccccccc}
\xi & = & 0 & 0 & 1 & 1 \\
\eta & = & 0 & 1 & 0 & 1 \\
\zeta & = &  1 &0 & 0 & 1
\end{array}$.}

Then $f(x_1,x_2,x_3,x_4)=\mbox{{\sc Is-Zero}}(x_1) \cdot g(x_2,x_3,x_4)$ or
 $\mbox{{\sc Is-One}}(x_4) \cdot h(x_1,x_2,x_3)$,
where $h \in \mathscr{M}$ and $g \in \mathscr{M'}$.
Note that the {\sc Is-Zero} and {\sc Is-One} are both
unary functions and both belong to $\mathscr{M} \cap \mathscr{M'}$.
By Theorem \ref{holant*:dichotomy},
 $\operatorname{Holant}(\neq_2\mid f)$
 is computable in
polynomial time.
\end{proof}

\section{Acknowledgments}
 We sincerely thank Xi Chen and Pinyan Lu for their interest and comments.

%% The Appendices part is started with the command \appendix;
%% appendix sections are then done as normal sections
%% \appendix

%% \section{}
%% \label{}

%% If you have bibdatabase file and want bibtex to generate the
%% bibitems, please use
%%
%%  \bibliographystyle{elsarticle-num}
%%  \bibliography{<your bibdatabase>}

%% else use the following coding to input the bibitems directly in the
%% TeX file.

\end{document}